\newtheorem{thm}{Theorem}
\newtheorem{prop}{Proposition}
\newtheorem{cor}{Corollary}
\theoremstyle{definition}
\newtheorem{definition}{Definition}
\title{On Noncontextual, Non-Kolmogorovian Hidden Variable Theories\thanks{SCF would like to thank audiences at Budapest, T\"ubingen, and Saig, Germany, especially Guido Bacciagalupi and Fay Dowker, for their comments.  BHF would like to thank audiences at the Perimeter Institute and Chapman University.  Both authors acknowledge the support of National Science Foundation Graduate Research Fellowships.}}
\author{Benjamin H.~Feintzeig}
\affil{Department of Philosophy\\University of Washington}
\author{Samuel C.~Fletcher}
\affil{Department of Philosophy\\University of Minnesota\\\&\\Munich Center for Mathematical Philosophy\\Ludwig Maximilian University}
\begin{document}
\maketitle
\begin{abstract}
One implication of Bell's theorem is that there cannot in general be hidden variable models for quantum mechanics that both are noncontextual and retain the structure of a classical probability space. 
Thus, some hidden variable programs aim to retain noncontextuality at the cost of using a generalization of the Kolmogorov probability axioms.  
We generalize a theorem of \citet{Fe15} to show that such programs are committed to the existence of a finite null cover for some quantum mechanical experiments, i.e., a finite collection of probability zero events whose disjunction exhausts the space of experimental possibilities.
\end{abstract}

\section{Introduction}
Bell's theorem \citep{Be64} and its generalizations (e.g., by \citet{ClHoShHo69,Be71,ClHo74,As83}; and \citet{Me86}) restrict which kinds of hidden variable theories for quantum mechanics are consistent with empirical observation and other theoretical commitments.  
In particular, a version due to \citet{Fi82a,Fi82b} and later presented geometrically by \citet{Pi89} shows that the statistics of a quantum mechanical experiment satisfy all Bell-type inequalities if and only if it may be given a classical (Kolmogorovian) probability space hidden variable model satisfying very weak constraints.  
The empirical violation of these inequalities in experiments \citep{AsDaRo82,AsGrRo82, Gietal13, Gietal15}, as predicted by quantum mechanics, rules against such classical probability models. 
Thus, some researchers still interested in pursuing hidden variable models, or more generally in giving realist probabilistic interpretations of quantum theory, have resorted to generalizations of classical probability theory.

They have not, however, always considered the consistency of such alternative theories with another no-go result, the Kochen-Specker (KS) theorem \citep{KoSp67}, which rules out any hidden variable theory that assigns noncontextual, definite values to all physical quantities and preserves standard functional (or logical) relationships between them.
(See also \citet{Be66}.)
Because many of these theories are committed to noncontextuality and definite values, the KS theorem would seem to bear upon them.
Generalizing a theorem of \citet{Fe15}, we show in \S\ref{sec:theorem} that the KS theorem in fact implies that a wide variety of noncontextual hidden variable theories are committed to the existence of a finite null cover for some quantum mechanical experiments.
That is, they are committed, for some choices of possible observables, to a finite collection of measurable events that cover the hidden variable sample space but are each assigned measure zero. 
In simple terms, they are committed to a finite collection of probability zero events whose disjunction exhausts the space of possibilities.

We apply this result in \S\ref{sec:applications} to a variety of models, theories, and approaches that employ generalizations of classical probability spaces:
\begin{itemize}
\item Generalized probability spaces (\S\ref{sec:GPS}), which relax the algebra of classical probability theory to an additive class.\footnote{Kolmogorovian probability theory typically assumes a stronger $\sigma$-algebra, which requires the probability measure to be $\sigma$-additive, that is, additive over countable collections of (disjoint) sets, rather than just finite ones. (Cf. definition \ref{def:CPS}.)  Our results, however, nowhere appeal to this stronger condition, so we have dropped it everywhere for clarity of expression.}
\item Nonmonotonic/negative/extended probability spaces (\S\ref{sec:EPS}), including $p$-adic versions, which allow the probability measure to take on negative values, and the more general complex probability spaces, which allow it to be valued in $\mathbb{C}$.  The former include ``single real history'' versions of the decoherent histories formalism.
\item Upper (lower) probability spaces (\S\ref{sec:U(L)PS}), which relax the additivity of the probability measure on disjoint sets to sub- (resp. super-)additivity.
\item Quantum measure theory (\S\ref{sec:QMT}), which relaxes additivity of the probability measure on pairs of disjoint sets to additivity of triplets.
\end{itemize}
Yet other approaches may also fall under its scope.

In our discussion in \S\ref{sec:FNC}, we show how any such theory must commit its users to a sort of incoherent belief if it is to recover the predictions of quantum mechanics for KS experiments.
Thus it is not even clear that an agent with complete knowledge of the hidden states would even be able to use that knowledge as a guide to rational action.
These features cast doubt on the idea that these generalizations capture any notion of coherent probability at all, despite these spaces' formal similarities with classical probability spaces.  
Consequently, it is unclear whether they have any advantages outweighing their disadvantages over contextual theories in their ability to explain puzzling features of quantum theory.

\section{The Central Theorem}
\label{sec:theorem}

To motivate our theorem, we begin with a certain perspective on Bell's theorem, its generalizations, and hidden variable theories.  Our perspective focuses on structured representations of possible experimental outcomes and their underlying probability theory.

\begin{definition}
	A \textit{simple quantum mechanical experiment} is a triple
	$(\mathcal{H},\psi,\mathcal{O}_n)$, where $\mathcal{H}$ is a Hilbert space, $\psi \in \mathcal{H}$ is a unit vector, and $\mathcal{O}_n = \{ P_1, \ldots, P_n \}$ is a set of $n$ projection operators on $\mathcal{H}$.
\end{definition}

The collection of operators $\mathcal{O}_n$ of a simple quantum mechanical experiment represents the possible projective measurements that can be made on $\psi$.   Each operator $P_i$ corresponds to a ``yes-no" measurement---that is, one with only two possible outcomes---with the probability of a ``yes" outcome given by $\langle \psi |P_i|\psi\rangle$, as usual.
Such experiments are the subject of Bell-type theorems; to state the one of most relevance here, we must introduce the terminology of classical probability theory.\footnote{See \citet{Ma12} and \citet[\S2.2]{Fe15} for more on the relationship of this presentation with more standard presentations of Bell's theorem.}

\begin{definition}
	An \textit{algebra} $\Sigma$ for a nonempty set $X$ is a nonempty subset of the power set $\mathcal{P}(X)$ such that
	\begin{enumerate}
		\item for all $A \in \Sigma$, $X \backslash A \in \Sigma$, and
		\item for all $A,B \in \Sigma$, $A \cup B \in \Sigma$.
	\end{enumerate}
\end{definition}
\noindent Note that if $\Sigma$ is an algebra for $X$, then $\{ \emptyset, X \} \subseteq \Sigma$.

\begin{definition}
	A \textit{classical probability space} is a triple $(X,\Sigma,\mu)$, where $\Sigma$ is an algebra for the nonempty set $X$ and $\mu: \Sigma \to [0,1]$ is such that
	\begin{enumerate}
		\item $\mu(X) = 1$, and
		\item for all disjoint $A,B \in \Sigma$, $\mu(A \cup B) = \mu(A) + \mu(B)$.
	\end{enumerate}
\label{def:CPS}
\end{definition}

The function $\mu$ is typically called the space's \textit{probability measure}, and the elements of $\Sigma$ its \textit{measurable sets}.
Note as well that the second condition, additivity on disjoint pairs, immediately implies additivity on finite collections of mutually disjoint sets.

\begin{definition}
	A \textit{(restricted) classical probability space representation}\footnote{It is ``restricted'' since one might demand the second condition hold for all collections of compatible observables, not just pairs.  Cf. definitions 5 and 6 of \citet{Fe15}.} for a simple quantum mechanical experiment $(\mathcal{H},\psi,\mathcal{O}_n)$ is a classical probability space $(X,\Sigma,\mu)$ and a map $E: \mathcal{O}_n \to \Sigma$ satisfying both of the following conditions:
	\begin{enumerate}
		\item for each $P_i \in \mathcal{O}_n$, $\mu(E(P_i)) = \langle \psi | P_i | \psi \rangle$, and
		\item for each $P_i,P_j \in \mathcal{O}_n$, if $[P_i, P_j] = 0$, then $\mu(E(P_i) \cap E(P_j)) = \langle \psi | P_i P_j | \psi \rangle$.
	\end{enumerate}
\end{definition}

Classical probability space representations for simple quantum mechanical experiments thus have a measurable set corresponding to each possible measurement outcome, with the measure of the former equal to the probability value of the latter.  
If two possible measurements are compatible, (i.e., if they commute,) then the probability value of a ``yes" outcome on their joint measurement---i.e., a ``yes" outcome on both measurements---is the measure assigned to the intersection of their corresponding measurable sets.

There is a close connection between classical probability space representations and Bell-type inequalities:

\begin{thm}[\citealt{Fi82a,Fi82b,Pi89}]
\label{thm:Pitowsky}
	The statistics for outcomes of a simple quantum mechanical experiment satisfy all Bell-type inequalities if and only if the experiment has a classical probability space representation.
\end{thm}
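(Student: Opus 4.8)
The plan is to follow the geometric reformulation behind Theorem~\ref{thm:Pitowsky}, due to Pitowsky, in which the existence of a classical probability space representation is recast as membership of a ``correlation vector'' in a convex polytope, and in which the Bell-type inequalities are precisely the facet-defining half-space inequalities of that polytope. The engine is the Minkowski--Weyl theorem: a bounded polyhedron in a finite-dimensional real vector space admits two equivalent descriptions, as the convex hull of finitely many vertices and as the intersection of finitely many closed half-spaces. Concretely, I would let $K = \{(i,j) : i<j,\ [P_i,P_j]=0\}$ index the compatible pairs and work in the space with coordinates indexed by $\{1,\dots,n\} \cup K$. To each deterministic assignment $\epsilon \in \{0,1\}^n$ I associate the vertex $v_\epsilon$ whose $i$-th coordinate is $\epsilon_i$ and whose $(i,j)$-th coordinate is $\epsilon_i\epsilon_j$, and I let $\mathcal{C}$ be the convex hull of the resulting vectors. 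The experiment fixes a target vector $q$ with $q_i = \langle\psi|P_i|\psi\rangle$ and $q_{ij} = \langle\psi|P_iP_j|\psi\rangle$, and I would identify ``$q$ satisfies all Bell-type inequalities'' with ``$q \in \mathcal{C}$'' via the half-space description.

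For the forward direction, suppose a representation $(X,\Sigma,\mu,E)$ exists. Because only finitely many observables are involved, the events $E(P_i)$ partition $X$ into at most $2^n$ atoms $X_\epsilon = \bigcap_i A_i$, where $A_i = E(P_i)$ if $\epsilon_i=1$ and $A_i = X \setminus E(P_i)$ if $\epsilon_i = 0$; each $X_\epsilon$ lies in $\Sigma$ since an algebra is closed under complement and finite intersection. Setting $\lambda_\epsilon = \mu(X_\epsilon)$, finite additivity gives $\lambda_\epsilon \geq 0$ and $\sum_\epsilon \lambda_\epsilon = \mu(X) = 1$. I would then read off, using condition~1, that $q_i = \mu(E(P_i)) = \sum_{\epsilon} \lambda_\epsilon (v_\epsilon)_i$, and, using condition~2 for each compatible pair, that $q_{ij} = \mu(E(P_i) \cap E(P_j)) = \sum_{\epsilon} \lambda_\epsilon (v_\epsilon)_{ij}$. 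Hence $q = \sum_\epsilon \lambda_\epsilon v_\epsilon$ is a convex combination of vertices, so $q \in \mathcal{C}$ and $q$ satisfies every facet inequality.

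For the converse, suppose $q \in \mathcal{C}$. Then $q = \sum_k \lambda_k v_{\epsilon^{(k)}}$ for finitely many vertices and convex weights $\lambda_k$ (Carath\'eodory's theorem even caps the number needed at the dimension plus one). I would manufacture the representation directly: take $X$ to be the finite index set $\{k\}$, let $\Sigma = \mathcal{P}(X)$, set $\mu(\{k\}) = \lambda_k$, and define $E(P_i) = \{k : \epsilon^{(k)}_i = 1\}$. The same coordinate identities as above, run in reverse, verify conditions~1 and~2, which closes the equivalence.

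The main obstacle is not any single calculation but the correct bookkeeping of the compatibility structure together with the identification of ``all Bell-type inequalities'' with the facet description of $\mathcal{C}$. Because condition~2 constrains only compatible pairs, the polytope must be indexed by $\{1,\dots,n\} \cup K$ rather than over all pairs, and I would need to check that projecting out the incompatible-pair coordinates is exactly what leaves the representation free to assign joint values to noncommuting observables with no further constraint. Making the phrase ``all Bell-type inequalities'' precise---as the complete half-space (facet) description, not some \emph{ad hoc} finite list---is where the genuine conceptual content lies, and it is precisely what the Minkowski--Weyl theorem supplies.
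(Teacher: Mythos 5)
The paper does not prove this theorem itself---it imports it from Fine and Pitowsky---and your reconstruction is precisely the geometric correlation-polytope argument of Pitowsky that the paper cites: vertices from deterministic $\{0,1\}$ assignments, the forward direction via the atoms of the algebra generated by the $E(P_i)$, the converse by building a finite probability space from the convex weights, and Minkowski--Weyl to equate polytope membership with satisfaction of the facet (Bell-type) inequalities. Both directions check out, including the correct restriction of pair coordinates to commuting pairs, so your proposal is correct and follows essentially the same route as the paper's cited source.
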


It is well known that the statistics for some simple quantum mechanical experiments, namely the EPR-Bohm setup, violate Bell-type inequalities, so it follows from theorem \ref{thm:Pitowsky} that those experiments cannot be given a classical probability space representation.
Hidden variable theories, though, tend to have the ambition to provide a coherent and consistent representation for \textit{all} quantum mechanical experiments.
Thus theorem \ref{thm:Pitowsky} can be thought of as a no-go result for the use of classical probability spaces as hidden variable theories.  For ease of comparison with the no-go results that we present later, we will state this first no-go result explicitly as follows.

\begin{cor}
\label{cor:Pitowsky}
There exist simple quantum mechanical experiments with no classical probability space representation.
\end{cor}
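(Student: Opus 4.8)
The plan is to invoke Theorem \ref{thm:Pitowsky} in its contrapositive form: since any experiment whose statistics satisfy all Bell-type inequalities admits a classical probability space representation, it suffices to exhibit a single simple quantum mechanical experiment whose statistics violate one such inequality. The natural candidate, as the surrounding discussion notes, is the EPR-Bohm setup, and the most convenient Bell-type inequality for this formalism is the Clauser-Horne inequality, which is already phrased in terms of single and joint ``yes'' probabilities on commuting pairs of projections and so meshes directly with the definition of a representation.

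First I would realize the EPR-Bohm experiment as a triple $(\mathcal{H}, \psi, \mathcal{O}_4)$. I would take $\mathcal{H} = \mathbb{C}^2 \otimes \mathbb{C}^2$, let $\psi$ be the singlet state, and fix two measurement directions $\hat a_1, \hat a_2$ for the first particle and $\hat b_1, \hat b_2$ for the second. The ``yes'' projections for the associated spin measurements are $P_{a_i} = \tfrac12(I + \hat a_i \cdot \vec\sigma) \otimes I$ and $Q_{b_j} = I \otimes \tfrac12(I + \hat b_j \cdot \vec\sigma)$, and I would set $\mathcal{O}_4 = \{P_{a_1}, P_{a_2}, Q_{b_1}, Q_{b_2}\}$.

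Second I would record the relevant commutation structure: every cross-wing pair $P_{a_i}, Q_{b_j}$ commutes, so each joint ``yes'' probability $\langle \psi | P_{a_i} Q_{b_j} | \psi \rangle$ is well defined and is exactly the quantity constrained by condition~2 of a representation, whereas the same-wing pairs $P_{a_1}, P_{a_2}$ and $Q_{b_1}, Q_{b_2}$ generically fail to commute and so carry no joint constraint. I would then compute the single-outcome probabilities $\langle \psi | P_{a_i} | \psi \rangle = \tfrac12$ and the joint probabilities $\langle \psi | P_{a_i} Q_{b_j} | \psi \rangle = \tfrac14(1 - \hat a_i \cdot \hat b_j)$, and substitute them into the Clauser-Horne inequality. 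For the standard optimal choice of angles these statistics exceed the classical bound, so this experiment's statistics violate a Bell-type inequality.

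The conclusion is then immediate: by the contrapositive of Theorem \ref{thm:Pitowsky}, this experiment admits no classical probability space representation, which is precisely the existence claim of the corollary. The only genuine obstacle is bookkeeping---carefully translating the familiar $\pm 1$-valued presentation of the CHSH/Clauser-Horne scenario into the ``yes-no'' projection language used here, and checking the commutation relations---rather than any conceptual difficulty, since the quantitative violation of a Bell-type inequality by the singlet state is entirely standard.
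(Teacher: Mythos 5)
Your proposal is correct and follows the same route as the paper: the paper likewise derives the corollary from the contrapositive of Theorem~\ref{thm:Pitowsky} together with the well-known violation of Bell-type inequalities by the EPR-Bohm setup, which it simply cites rather than computes. Your explicit construction of the experiment $(\mathbb{C}^2\otimes\mathbb{C}^2,\psi,\mathcal{O}_4)$ and verification of the Clauser-Horne violation just fills in the standard details the paper leaves implicit.
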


Consider, then, the following much broader type of representation.

\begin{definition}
	A \textit{weak (noncontextual) probability space} is an ordered triple $(X,\Sigma,\mu)$, where $X$ is a nonempty set, $\Sigma\subseteq\mathcal{P}(X)$, and $\mu:\Sigma\to Y\supseteq[0,1]$.
\end{definition}

\begin{definition}
	A \textit{weak (noncontextual) hidden variable representation} for a simple quantum mechanical experiment $(\mathcal{H},\psi,\mathcal{O}_n)$ is a quadruple $(X,\Sigma,\mu,E)$, consisting of a weak probability space $(X,\Sigma,\mu)$ and a map $E: \mathcal{O}_n \to \Sigma$ satisfying both of the following conditions:
	\begin{enumerate}
		\item for each $P_i \in \mathcal{O}_n$, $\mu(E(P_i)) = \langle \psi | P_i | \psi \rangle$, and
		\item for each $P_i,P_j \in \mathcal{O}_n$, if $P_i \perp P_j$, then $E(P_i) \cap E(P_j) \in \Sigma$ and $\mu(E(P_i) \cap E(P_j)) = \langle \psi | P_i P_j | \psi \rangle = 0$.
	\end{enumerate}
\end{definition}

Weak hidden variable representations demand much less than classical probability space representations.  
First, weak representations do not require the event space $\Sigma$ to be an algebra for $X$ or the function $\mu$ to be a probability measure. The event space $\Sigma$ need not have any nice algebraic properties at all, and the function $\mu$ can be valued in any set containing the real unit interval, including complex or negative numbers.
Second, while weak representations still require each possible measurement of the experiment to have a corresponding ``measurable'' set, instead of requiring \textit{all} possible joint measurements to have as corresponding ``measurable'' sets the intersection of those for their single component measurements, they only require this for orthogonal observables.
(Note that this is a strict subclass, since orthogonal observables are commuting observables but not vice versa.)
The concept of a weak hidden variable representation is not intended by itself as a model for a hidden variable theory.
Indeed, actual proposed noncontextual hidden variable theories (see \S\ref{sec:applications}) generally add more structure than is demanded by a weak hidden variable representation.
Rather, it is intended as a broad framework that encompasses many actual, more interesting hidden variable proposals as specializations.
Because our theorem concerns weak hidden variable representations, it bears upon all such specializations, as we discuss in section $\S3$.

The engine behind the central theorem is the Kochen-Specker theorem:
\begin{thm}[\citealt{KoSp67}]
\label{thm:KS}
	For any Hilbert space $\mathcal{H}$ with $dim(\mathcal{H})\geq 3$, there is a finite collection of projection operators $\mathcal{O}_n = \{P_1,...,P_n\}$ on it such that there is no function $f : \mathcal{O}_n \to \{0,1\}$ that assigns 1 to exactly one element of every subset of $\mathcal{O}_n$ whose elements are mutually orthogonal and span $\mathcal{H}$.
\end{thm}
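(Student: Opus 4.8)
The plan is to reduce the statement to its essential content in dimension three and then build an explicit finite configuration of rays that admits no valid assignment. First I would take each $P_i$ to be a rank-one projection, onto a unit vector (``ray'') $v_i$, so that a subset of $\mathcal{O}_n$ whose elements are mutually orthogonal and span $\mathcal{H}$ is exactly an orthonormal basis drawn from the $v_i$, and the sought function $f$ becomes a two-coloring of the rays that places a single $1$ in every such basis. When $\dim(\mathcal{H}) \geq 3$ I would work inside a fixed three-dimensional subspace $\mathcal{H}_0$, where the spanning orthogonal subsets are orthogonal triples; the passage back to general $\mathcal{H}$ is by embedding, adjoining a fixed orthonormal basis of $\mathcal{H}_0^\perp$, with the caveat (discussed below) that one must prevent a complementary direction from trivially absorbing the unique $1$.

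Within $\mathcal{H}_0 \cong \mathbb{C}^3$, the coloring is governed by two local rules forced by the ``exactly one $1$'' condition: (i) two orthogonal rays are never both colored $1$, and (ii) in every orthogonal triple exactly one ray is colored $1$, so in particular the ray orthogonal to two rays colored $0$ must be colored $1$. The strategy is to choose finitely many rays so rigidly interlocked that these purely local constraints propagate around the configuration and collide: starting from any ray provisionally colored $1$, rule (i) forces a cascade of neighboring rays to $0$, rule (ii) then forces still others to $1$, and after finitely many steps one is compelled either to color two orthogonal rays $1$ or to leave some orthogonal triple with no $1$ at all---a contradiction in both cases. Assuming such a set exists establishes the theorem, since the contradiction is reached for \emph{every} candidate $f$.

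The real work, and the main obstacle, is exhibiting the interlocked finite set and verifying that it is genuinely uncolorable; this is the creative geometric-combinatorial core of the theorem. Here I would invoke one of the known explicit constructions rather than rediscover it: the original Kochen--Specker set of $117$ directions in $\mathbb{R}^3$, or a smaller modern substitute such as Peres's $33$ rays or the Conway--Kochen $31$-ray set, for each of which the impossibility is confirmed by a finite (if tedious) case analysis of the propagating constraints. An attractively clean alternative route runs through dimension four, where the Peres--Mermin ``magic square'' of nine observables yields the contradiction as a parity mismatch---multiplying the value assignments around rows and columns forces $+1 = -1$---which can be recast in the present ray/projection language; I would flag this as the slickest way to produce a genuine contradiction, while noting that the dimension-three construction is what the stated formulation most directly demands.

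Finally, I would close the dimension-reduction gap flagged above. Naively adjoining an orthonormal basis $u_1,\dots,u_{d-3}$ of $\mathcal{H}_0^\perp$ and requiring exactly one $1$ per basis fails, because coloring a single $u_i$ with $1$ and every remaining ray with $0$ satisfies every constraint; the remedy is to incorporate the complementary directions into the uncolorable web itself, so that none of them can carry the unique $1$. Any admissible coloring then restricts to a valid coloring of the three-dimensional set and contradicts the core construction; alternatively one may invoke a direct uncolorable set in each finite dimension (for instance eighteen rays in dimension four), with the infinite-dimensional case handled by the same three-dimensional subspace argument. I expect this bookkeeping to be routine relative to the central difficulty, which is the existence of the interlocked uncolorable configuration itself.
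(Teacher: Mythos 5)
The paper never proves this theorem; it imports it wholesale by citation to \citet{KoSp67}, so your proposal can only be compared against the standard literature argument. For dimension three your sketch matches that argument: rank-one projections, the two local coloring rules, and an explicit finite uncolorable configuration of rays whose case-by-case verification you correctly identify as the irreducible combinatorial core (and reasonably decline to reconstruct). Your diagnosis of why the naive embedding into higher dimensions fails---a complementary direction can trivially absorb the unique $1$---is also right, and repairing it by weaving those directions into the uncolorable web (equivalently, by an induction on dimension, or by citing dimension-specific sets such as the eighteen rays in dimension four) does dispose of every \emph{finite} dimension $\geq 3$.

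The genuine gap is the infinite-dimensional case, which the statement explicitly covers ($\dim(\mathcal{H})\geq 3$ includes infinite dimension) and which your closing remedy---``the same three-dimensional subspace argument''---cannot reach even in principle. Finitely many rays, indeed finitely many projections of finite rank, can never span an infinite-dimensional $\mathcal{H}$; a finite collection of them therefore has \emph{no} mutually orthogonal spanning subsets at all, every constraint is vacuous, and $f\equiv 0$ is admissible, so no such collection is a witness. Nor does it help to adjoin the single infinite-rank projection $Q$ onto $\mathcal{H}_0^{\perp}$ for a three-dimensional subspace $\mathcal{H}_0$ carrying your uncolorable rays: every spanning subset then contains $Q$, so the ``dictator'' coloring assigning $1$ to $Q$ and $0$ to every ray satisfies every constraint. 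A witness in infinite dimensions must consist of interlocking \emph{infinite-rank} projections, and the standard way to obtain them is a tensor decomposition: write $\mathcal{H}\cong\mathbb{C}^3\otimes\mathcal{H}'$ (possible exactly when $\dim\mathcal{H}$ is infinite or finite and divisible by three) and set $P_i = |v_i\rangle\langle v_i|\otimes I_{\mathcal{H}'}$ for your uncolorable rays $v_i\in\mathbb{C}^3$. Two such projections are orthogonal iff the corresponding rays are, and a mutually orthogonal family of them sums to the identity iff the corresponding rays form an orthonormal basis of $\mathbb{C}^3$, so the constraint hypergraph---and hence uncolorability---transfers verbatim; the finite dimensions not divisible by three are then covered by your induction. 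Without some such step, what you have proved is only the finite-dimensional theorem, not the one stated.
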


In particular, because this theorem can be understood as a proof of existence---that there exists, in particular, a simple quantum mechanical experiment with such and such properties---we shall employ the concept of a witness to the theorem:

\begin{definition}
	A \textit{KS witness} is a simple quantum mechanical experiment $(\mathcal{H},\psi,\mathcal{O}_n)$ such that $\dim(\mathcal{H}) \geq 3$ and there is no function $f : \mathcal{O}_n \to \{0,1\}$ that assigns 1 to exactly one element of every subset of $\mathcal{O}_n$ whose elements are mutually orthogonal and span $\mathcal{H}$.\footnote{Notice that if $(\mathcal{H},\psi,\mathcal{O}_n)$ is a KS witness, then so is $(\mathcal{H},\psi',\mathcal{O}_n)$ for any distinct $\psi'\in\mathcal{H}$ because the KS theorem does not depend on any choice of state vector.  We nevertheless include an arbitrary state vector $\psi$ in all of our results to highlight the structural similarity with Bell-type no-go theorems.}
\end{definition}

We will need only the following two very weak constraints on weak hidden variable representations $(X,\Sigma,\mu,E)$ of a simple quantum mechanical experiment $(\mathcal{H},\psi,
\mathcal{O}_n)$.

\begin{description}
		\item[Weak Classicality (WC)] If $Q \subseteq \mathcal{O}_n$ consists of only mutually orthogonal projection operators spanning $\mathcal{H}$, then $(X, \Sigma_Q, \mu_{|\Sigma_Q})$ is a classical probability space, where $\Sigma_Q$ is the smallest algebra for $X$ containing $\{E(P_i): P_i \in Q\}$.
		\item[No Finite Null Cover ($\neg$FNC)] There is no finite collection $\{B_1, \ldots, B_m\} \subseteq \Sigma$ such that $\mu(B_i)=0$ for all $i \in \{1, \ldots, m\}$ and $\bigcup_{i=1}^m B_i = X$.
	\end{description}
    
WC demands that the statistics associated with special collections of commuting operators, namely ones that are often interpreted as being simultaneously measurable, can always be modeled by a classical probability space.  
This fits with our ordinary empirical practice in which the results for a single setting of an experiment are always indicative of classical probabilities.  
$\neg$FNC, on the other hand, rules out a certain kind of pathology.  
It requires that there is never a finite collection of probability zero events that cover the entire space, or in other words, that there is no finite collection of probability zero events whose disjunction is guaranteed to occur.  Of course, classical probability space representations always satisfy both of these conditions; weak hidden variable representations need not, which is why we must include both WC and $\neg$FNC as explicit additional assumptions.
Using these conditions, we now state our main result.

\begin{thm}
\label{thm:main}
	No KS witness $(\mathcal{H},\psi,\mathcal{O}_n)$ has a weak hidden variable representation $(X,\Sigma,\mu,E)$ satisfying both WC and $\neg$FNC.
\end{thm}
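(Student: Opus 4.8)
The plan is to argue by contradiction: suppose a KS witness $(\mathcal{H},\psi,\mathcal{O}_n)$ did have a weak hidden variable representation $(X,\Sigma,\mu,E)$ satisfying both WC and $\neg$FNC. I would then use WC to manufacture a finite null cover of $X$ inside $\Sigma$, directly contradicting $\neg$FNC. The conceptual engine is that each hidden state $x\in X$ induces a candidate truth-value assignment $f_x:\mathcal{O}_n\to\{0,1\}$ by setting $f_x(P_i)=1$ iff $x\in E(P_i)$; the Kochen--Specker property forbids any such $f_x$ from being ``coherent'' on every orthogonal spanning subset at once, and the null cover will collect exactly the hidden states at which coherence fails.

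First I would fix, for each subset $Q\subseteq\mathcal{O}_n$ of mutually orthogonal projections spanning $\mathcal{H}$, the generated algebra $\Sigma_Q$; since $\mathcal{O}_n$ is finite there are only finitely many such $Q$. WC asserts that $(X,\Sigma_Q,\mu_{|\Sigma_Q})$ is a classical probability space, so in particular the restriction $\mu_{|\Sigma_Q}$ is well defined, whence $\Sigma_Q\subseteq\Sigma$ and $\mu$ is a genuine $[0,1]$-valued, finitely additive measure on $\Sigma_Q$. Because the elements of $Q$ are mutually orthogonal projections spanning $\mathcal{H}$, their sum is the identity, so
\[
\sum_{P_i\in Q}\mu(E(P_i))=\sum_{P_i\in Q}\langle\psi|P_i|\psi\rangle=\langle\psi|\psi\rangle=1 .
\]
Condition 2 of a weak hidden variable representation gives $\mu(E(P_i)\cap E(P_j))=0$ for distinct $P_i,P_j\in Q$. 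Working inside the classical probability space $\Sigma_Q$, finite inclusion--exclusion (equivalently, disjointifying the $E(P_i)$) then yields $\mu\bigl(\bigcup_{P_i\in Q}E(P_i)\bigr)=1$, so the ``bad set'' for $Q$,
\[
B_Q:=\Bigl(X\setminus\bigcup_{P_i\in Q}E(P_i)\Bigr)\cup\bigcup_{\substack{P_i,P_j\in Q\\ i\neq j}}\bigl(E(P_i)\cap E(P_j)\bigr),
\]
lies in $\Sigma_Q\subseteq\Sigma$ and, as a finite union of $\mu$-null members of the classical space $\Sigma_Q$, has $\mu(B_Q)=0$.

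It then remains to show that $\{B_Q\}_Q$ covers $X$. Suppose some $x\in X$ lay in no $B_Q$. For each orthogonal spanning $Q$, the fact that $x\notin X\setminus\bigcup_{P_i\in Q}E(P_i)$ forces $x$ into at least one cell $E(P_i)$ with $P_i\in Q$, while $x$ avoiding every pairwise intersection forces it into at most one. Hence $x$ lies in exactly one cell of every orthogonal spanning $Q$, so $f_x$ assigns $1$ to exactly one element of every such subset---precisely the function the KS witness property declares impossible. Therefore no such $x$ exists, $\bigcup_Q B_Q=X$, and $\{B_Q\}_Q$ is a finite collection of $\mu$-null sets in $\Sigma$ covering $X$, contradicting $\neg$FNC.

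The main obstacle, and the step deserving the most care, is guaranteeing that the sets $B_Q$ really are measurable (i.e., in $\Sigma$) and really are $\mu$-null, despite $\Sigma$ carrying no algebraic structure and $\mu$ being permitted to take complex or negative values in general. This is exactly what WC buys us: it localizes ordinary, nonnegative, finitely additive probability to each measurement context $Q$, so that the generated algebra $\Sigma_Q$ sits inside $\Sigma$ and the gap set $X\setminus\bigcup_{P_i\in Q}E(P_i)$ acquires a bona fide measure of zero via the finitely additive inclusion--exclusion computation. Once classicality is thus confined to single contexts, the Kochen--Specker obstruction does the remaining work of forcing these contextwise-null sets to exhaust $X$.
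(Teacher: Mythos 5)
Your proof is correct and takes essentially the same approach as the paper's: you assemble the same two families of null sets (the complements of the context unions, shown null via WC and finite additivity, and the pairwise intersections $E(P_i)\cap E(P_j)$, null by condition 2 of the representation) and invoke the KS witness property on the pointwise assignment $f_x$. The only difference is organizational---the paper uses $\neg$FNC to extract a surviving point $x$ and derives the contradiction against the KS property, whereas you use the KS property to show no point survives and derive the contradiction against $\neg$FNC---a logically equivalent rearrangement.
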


The strategy of the proof is to consider the corresponding measurable sets $E(P_i)$ for all collections of mutually orthogonal projection operators $P_i$ in $\mathcal{O}_n$ that span $\mathcal{H}$.  
One then removes points of the sample space $X$ that are either in none of the sets $E(P_i)$, or are in $E(P_i) \cap E(P_j)$ for orthogonal $P_i$ and $P_j$: these are, respectively, hidden states that realize none of the observables corresponding to the $P_i$ spanning $\mathcal{H}$, and hidden states that realize incompatible (orthogonal) observables.
Hidden states in these two classes \textit{allow} for the construction of what would otherwise be a KS-prohibited function $f: \mathcal{O}_n \to \{0,1\}$ because they break the correspondence between mutually orthogonal operators spanning $\mathcal{H}$ and value definiteness.  % Maybe we should say more about this connection?
$\neg$FNC guarantees that this procedure does not remove all the points of $X$, and from at least one remaining point, one can construct a KS-prohibited function $f$, leading to a contradiction.

\begin{proof}
For the sake of deriving a contradiction, suppose there is a KS witness $(\mathcal{H},\psi,\mathcal{O}_n)$ with a weak hidden variable representation $(X,\Sigma,\mu,E)$ satisfying WC and $\neg$FNC.
Let
\[C_1 = \{ E(P_i) \cap E(P_j) \in \Sigma : P_i \perp P_j \}\]
and let $R_1 = \bigcup C_1$.  According to this definition, $R_1$ is the collection of all hidden states in $X$ that would yield ``yes" outcomes to some pair of ``contradictory" measurements, i.e., orthogonal projectors.
Further, for any subcollection $Q\subseteq\mathcal{O}_n$, let $T_Q = \bigcup_{P_i \in Q} E(P_i)$, and then define
\[C_2 = \{X \backslash T_Q : Q \subseteq \mathcal{O}_n \text{ contains mutually orthogonal projection operators spanning } \mathcal{H} \}\]
and $R_2 = \bigcup C_2$.  According to this definition, $R_2$ is the collection of all hidden states in $X$ that would yield a ``no" outcome for some collection of ``exhaustive" measurements, i.e. orthogonal projectors spanning $\mathcal{H}$.
Finally, define the ``reduced'' space $X' = X \backslash (R_1 \cup R_2)$ and subsets $A'_i = E(P_i) \cap X'$.

We claim that $X'$ is non-empty.  
To see this, note that for any $Q \subseteq \mathcal{O}_n$ whose members are mutually orthogonal and span $\mathcal{H}$, by WC $(X,\Sigma_Q,\mu_{|\Sigma_Q})$ is a classical probability space with $\Sigma_Q$ the smallest algebra containing $\{E(P_i) : P_i \in Q\}$.  
Since $\Sigma_Q$ is closed under unions, $T_Q \in \Sigma_Q$, hence $\mu(T_Q) = \mu\left(\bigcup_{P_i \in Q} E(P_i)\right) = \sum_{P_i \in Q} \mu(E(P_i)) = \sum_{P_i \in Q} \langle \psi | P_i | \psi \rangle = 1$.  
Since $\Sigma_Q$ is closed under complementation, $X \backslash T_Q \in \Sigma_Q$, hence $\mu(X \backslash T_Q) + \mu(T_Q) = \mu((X \backslash T_Q) \cup T_Q) = \mu(X) = 1$.  
It follows that $\mu(X \backslash T_Q) + 1 = 1$, so $\mu(X \backslash T_Q) = 0$.

Therefore $R_2$ is a finite union of null sets.  
By the hypothesis of a weak hidden variable representation, $\mu(E(P_i) \cap E(P_j)) = 0$ when $P_i \perp P_j$, so $R_1$ is a finite union of null sets, too.  
It then follows from $\neg$FNC that $R_1 \cup R_2$ cannot cover $X$, so $X'$ must be non-empty.

Thus we may consider some $x \in X'$ and define $f_x: \mathcal{O}_n \to \{0,1\}$ as
\begin{equation*}
f_x(P_i) = \begin{cases}
1, \text{ if } x \in A'_i, \\
0, \text{ if } x \notin A'_i,
\end{cases}
\end{equation*}
which is just $\chi_{A'_i}(x)$, the characteristic function of $A'_i$.
Now consider any $Q \subseteq \mathcal{O}_n$ whose members are mutually orthogonal and span $\mathcal{H}$.

First, we claim that $x \in A'_i = E(P_i) \cap X'$ for at least one $P_i \in Q$.  
For, since $X \backslash T_Q \subseteq R_2$ and $X' \cap R_2 = \emptyset$, $X' \cap (X \backslash T_Q) = \emptyset$.  
But since $X' \subseteq X$, $X' \cap (X \backslash T_Q) = X' \backslash (X' \cap T_Q)$, hence $X' \cap T_Q = X'$.  
Thus $\bigcup_{P_i \in Q} A'_i = \bigcup_{P_i \in Q} (E(P_i) \cap X') = X' \cap \left( \bigcup_{P_i \in Q} E(P_i) \right) = X' \cap T_Q = X'$.  
Hence the $A'_i$ cover $X'$, so $x$ must be in at least one $A'_i$.

Second, we claim that $x \in A'_i = E(P_i) \cap X'$ for at most one $P_i \in Q$.  
Note that $A'_i \cap A'_j = (E(P_i) \cap X') \cap (E(P_j) \cap X') = (E(P_i) \cap E(P_j)) \cap X'$. 
Since $P_i \perp P_j$ if $i \neq j$, $E(P_i) \cap E(P_j) \subseteq R_1$.  
Furthermore, $R_1 \cap X' = \emptyset$, so $(E(P_i) \cap E(P_j)) \cap X' = \emptyset$.  
Hence $A'_i \cap A'_j = \emptyset$, which shows that $x$ cannot belong to both $A_i'$ and $A_j'$ for any two distinct $P_i,P_j\in Q$.

Thus, $x \in A'_i$ for exactly one $i$, from which it follows that $f(P_i) = 1$ for exactly one $P_i \in Q$.
Furthermore, this holds for any $Q \subseteq \mathcal{O}_n$ whose members are mutually orthogonal and span $\mathcal{H}$.  
But by theorem \ref{thm:KS}, this function $f$ cannot exist, which yields our desired contradiction.
\end{proof}

This shows that under very weak constraints, one arrives at a no-go result for weak hidden variable representations quite similar in form to the no-go result for classical probability space representations.  It is worth making explicit how the logical structure of this result replicates that of the Bell-type theorems presented above; since the KS theorem guarantees the existence of KS witnesses, we can state the result as follows.

\begin{cor}
\label{cor:main}
There exist simple quantum mechanical experiments with no weak hidden variable representation satisfying both WC and $\neg$FNC.
\end{cor}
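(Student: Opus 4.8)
The plan is to combine the main theorem (Theorem \ref{thm:main}) with the Kochen--Specker theorem (Theorem \ref{thm:KS}) to exhibit a concrete simple quantum mechanical experiment of the desired kind. The key observation is that Theorem \ref{thm:main} already establishes that \emph{no} KS witness admits a weak hidden variable representation satisfying both WC and $\neg$FNC; all that remains is to confirm that KS witnesses actually exist, which is precisely the existential content supplied by the Kochen--Specker theorem.

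First I would fix any Hilbert space $\mathcal{H}$ with $\dim(\mathcal{H}) \geq 3$, for instance $\mathbb{C}^3$. By Theorem \ref{thm:KS}, there is then a finite collection $\mathcal{O}_n = \{P_1, \ldots, P_n\}$ of projection operators on $\mathcal{H}$ for which no function $f: \mathcal{O}_n \to \{0,1\}$ assigns $1$ to exactly one element of every subset of $\mathcal{O}_n$ whose members are mutually orthogonal and span $\mathcal{H}$. Next I would choose any unit vector $\psi \in \mathcal{H}$; since the defining property of a KS witness places no constraint whatever on the state vector (as noted in the footnote to the definition of a KS witness), the resulting triple $(\mathcal{H},\psi,\mathcal{O}_n)$ is by construction a KS witness.

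Finally, I would invoke Theorem \ref{thm:main} directly: this particular KS witness has no weak hidden variable representation $(X,\Sigma,\mu,E)$ satisfying both WC and $\neg$FNC. Because a KS witness is, in particular, a simple quantum mechanical experiment, this witness already furnishes exactly the existence claim asserted by the corollary.

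I do not expect any substantive obstacle here, since the corollary is an immediate consequence of the two theorems and the genuine mathematical work resides entirely in the proof of Theorem \ref{thm:main}. The only point requiring any care is the logical bookkeeping: the Kochen--Specker theorem supplies the existential quantifier over experiments (guaranteeing that a witness exists), while the main theorem supplies the universal negation (ruling out the relevant representations for every such witness), and the two combine to give the desired existential statement. This exactly mirrors the structure observed earlier for Corollary \ref{cor:Pitowsky}, where Theorem \ref{thm:Pitowsky} plays the analogous existence-securing role for classical probability space representations.
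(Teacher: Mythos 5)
Your proof is correct and takes exactly the paper's route: the paper derives this corollary by noting that the KS theorem guarantees the existence of KS witnesses, which are in particular simple quantum mechanical experiments, and then applying Theorem \ref{thm:main} to them. Your added care about fixing $\mathcal{H} = \mathbb{C}^3$ and an arbitrary unit vector $\psi$ just makes explicit what the paper leaves implicit.
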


If one wants to use some kind of weak hidden variable representation to recover the predictions of quantum mechanics, then one must reject one of the assumptions, WC or $\neg$FNC.  
Either of these options comes with a high price, as we will discuss in \S\ref{sec:FNC}.  
Before doing so, we show that theorem \ref{thm:main} applies to many kinds of hidden variable theories and alternative probability theories that have been explicitly proposed for quantum mechanics.

\section{Applications}
\label{sec:applications}

In this section we apply theorem \ref{thm:main} to a variety of alternative noncontextual probability theories that have been proposed for quantum mechanics.  
Each of these probability theories weakens the assumptions of classical probability spaces in some particular way.  
The generalized probability spaces considered in \S\ref{sec:GPS} relax the requirement that the events form an algebra. 
The extended probability spaces considered in \S\ref{sec:EPS} relax the requirement that the range of the measure is contained in the real unit interval. 
Lastly, both the upper (lower) probability spaces in \S\ref{sec:U(L)PS} and the quantum measure spaces of \S\ref{sec:QMT} relax the second Kolmogorov axiom (additivity), although in slightly different ways.  
We will save discussion of the interpretation of finite null covers for \S\ref{sec:FNC}.

\subsection{Generalized Probability Spaces}
\label{sec:GPS}

Arthur Fine \citeyearpar{Fi82a,Fi82b} has argued that the real lesson we learn from Bell's theorem is that one cannot assign joint probabilities to incompatible observables.  
One way to understand this is as a proposal to use so-called generalized probability spaces, as suggested by \citet{Su65,Su66,KrLuSuTv71}; and \citet{Gu88}.\footnote{Note that this is not the solution Fine favors; rather, he suggests the use of what he calls \textit{prism models} \citep{Fi82c}, which we will not analyze in this paper.}
The complaint against classical probability space representations is that, even though they make no requirements on the actual probability values assigned to conjunctions of incompatible observables, these representations do require that we assign \textit{some probability or other} to such conjunctions.  
This is because if $\Sigma$ is an algebra, then whenever $A,B\in\Sigma$, it follows that $A\cap B\in\Sigma$.  
Generalized probability spaces relax the requirement that the event space $\Sigma$ forms an algebra, allowing us to assign probabilities to the individual events corresponding to the outcomes of incompatible observables without assigning any probability at all to their conjunction.

\begin{definition}
An \textit{additive class} for a nonempty set $X$ is a nonempty subset of the power set $\mathcal{P}(X)$ such that
	\begin{enumerate}
    \item for all $A\in\Sigma$, $X\backslash A\in \Sigma$, and
    \item for all disjoint $A,B\in\Sigma$, $A\cup B\in\Sigma$.
    \end{enumerate}
\end{definition}
\noindent Notice that the only way that the definition of an additive class differs from that of an algebra is the restriction to disjoint sets in clause 2.  

\begin{definition}
A \textit{generalized probability space} is a triple $(X,\Sigma,\mu)$, where $\Sigma$ is an additive class for the nonempty set $X$ and $\mu:\Sigma\to[0,1]$ is such that
	\begin{enumerate}
    \item $\mu(X) = 1$, and
    \item for all disjoint $A,B\in \Sigma$, $\mu(A\cup B) = \mu(A) + \mu(B)$.
    \end{enumerate}
\end{definition}

Even though the Kolmogorov axioms 1 and 2 carry over exactly intact, generalized probability spaces are different from classical probability spaces in that the measure $\mu$ need not assign any probability at all to the union or intersection of events in $\Sigma$ that are not disjoint.  
\citet{Ma12} has found a generalized probability hidden variable representation for the Bell-EPR setup \citep{ClHoShHo69}.  
So it may seem as if generalized probability spaces escape the purview of Bell's theorem.  
Nevertheless, theorem \ref{thm:main} shows generalized probability spaces do not escape the KS theorem.

\begin{prop}
Every generalized probability space is a weak probability space.
\end{prop}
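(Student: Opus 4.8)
The plan is to observe that the definition of a weak probability space imposes no structural constraints on its three components beyond their types: $X$ must be a nonempty set, $\Sigma$ must be a subset of $\mathcal{P}(X)$, and $\mu$ must be a function from $\Sigma$ into some set $Y$ containing the unit interval $[0,1]$. In particular, the notion of a weak probability space demands neither that $\Sigma$ be closed under any operations nor that $\mu$ satisfy normalization or additivity. Thus I would reduce the proof to verifying that the data of a generalized probability space line up with these three type requirements, one at a time.

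First I would note that in a generalized probability space $(X,\Sigma,\mu)$, the set $X$ is nonempty, since an additive class is defined only for a nonempty set $X$. Second, since $\Sigma$ is an additive class for $X$, it is by definition a (nonempty) subset of $\mathcal{P}(X)$, so $\Sigma\subseteq\mathcal{P}(X)$ as the definition of a weak probability space requires. Third, the measure of a generalized probability space satisfies $\mu:\Sigma\to[0,1]$; taking $Y=[0,1]$, which trivially satisfies $Y\supseteq[0,1]$, the codomain condition for a weak probability space is met. All three conditions then hold, and the result follows immediately.

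There is no substantive obstacle here; the only point one must be careful not to overlook is the last step, namely that the containment $Y\supseteq[0,1]$ in the definition of a weak probability space is non-strict, so that the interval $[0,1]$ itself is an admissible choice for $Y$. The whole content of the proposition is that the comparatively rich structure of a generalized probability space is in fact a special case of the deliberately minimal framework of weak probability spaces, so that theorem \ref{thm:main} will apply to any generalized probability hidden variable representation once the hypotheses WC and $\neg$FNC have been checked separately.
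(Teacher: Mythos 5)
Your proof is correct and matches the paper's (implicit) argument: the proposition is an immediate definition check, since a generalized probability space's nonempty $X$, additive class $\Sigma\subseteq\mathcal{P}(X)$, and measure $\mu:\Sigma\to[0,1]$ satisfy the three type requirements of a weak probability space, taking $Y=[0,1]$. Your observation that the containment $Y\supseteq[0,1]$ is non-strict is exactly the right point to make explicit.
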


\begin{cor}
There exist simple quantum mechanical experiments with no weak hidden variable representation on a generalized probability space satisfying both WC and $\neg$FNC.\footnote{See \citet{Fe15} for an extensive discussion of generalized probability spaces and a slight variation on the main theorem applied to them.}
\end{cor}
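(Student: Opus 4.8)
The plan is to derive this corollary as an immediate specialization of Corollary \ref{cor:main} via the preceding Proposition. First I would verify the Proposition, since that is where the (minimal) content lies: given any generalized probability space $(X,\Sigma,\mu)$, I need only check that it meets the three requirements of a weak probability space. The set $X$ is nonempty by hypothesis; the event space $\Sigma$ is an additive class, hence in particular a subset of $\mathcal{P}(X)$, which is all a weak probability space demands (the additive-class axioms are extra structure that a weak probability space does not require); and the measure satisfies $\mu:\Sigma\to[0,1]$, so taking $Y=[0,1]$ gives $\mu:\Sigma\to Y\supseteq[0,1]$. Each clause of the weak-probability-space definition is thus a strictly weaker demand than what a generalized probability space already satisfies, so the inclusion holds trivially.

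Next I would observe that this structural inclusion lifts to representations. A weak hidden variable representation on a generalized probability space is a quadruple $(X,\Sigma,\mu,E)$ whose first three components form a generalized probability space and whose map $E$ satisfies the two representation conditions. Since those conditions are verbatim the ones defining a weak hidden variable representation, and since $(X,\Sigma,\mu)$ is a weak probability space by the Proposition, any such quadruple is \emph{ipso facto} a weak hidden variable representation. Moreover, WC and $\neg$FNC are conditions phrased purely in terms of $(X,\Sigma,\mu,E)$ and are unaffected by which subclass of weak probability spaces the triple inhabits, so a representation satisfies them in the generalized setting exactly when it satisfies them in the weak setting.

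Finally I would invoke Corollary \ref{cor:main}: there exist simple quantum mechanical experiments---namely KS witnesses, guaranteed to exist by Theorem \ref{thm:KS}---admitting no weak hidden variable representation satisfying both WC and $\neg$FNC. For such an experiment, a generalized-probability-space representation satisfying WC and $\neg$FNC cannot exist either, for it would be a weak hidden variable representation satisfying WC and $\neg$FNC, contradicting Corollary \ref{cor:main}. I do not anticipate any genuine obstacle: the entire argument is a specialization, and the only point requiring care is to confirm the \emph{direction} of the inclusion---that passing to the narrower class of generalized probability spaces imposes additional constraints rather than relaxing them, so that non-existence in the broader class of weak probability spaces forces non-existence in the narrower one.
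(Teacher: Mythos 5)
Your proposal is correct and matches the paper's (implicit) argument exactly: the paper states the Proposition that every generalized probability space is a weak probability space and then derives the corollary as an immediate specialization of Corollary \ref{cor:main}, which is precisely your route, including the key observation that non-existence in the broader class of weak representations forces non-existence in the narrower generalized-probability-space class. The only difference is that you spell out the definition-checking and the lifting to representations, which the paper leaves tacit.
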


%Maybe it would be a good idea to do the same thing here that you do for the upper probability space models? I.e. take the generalized probability space rep. that David constructs for the EPR experiment and find the finite null cover or weak-classicality violating subrepresentation.

\subsection{Extended Probability Spaces}
\label{sec:EPS}

Many authors have proposed extending the range of values that a probability measure can take in order to accommodate the predictions of quantum mechanics, especially those implicated in testing Bell-type inequalities.  
There are two obvious ways to do so: first, one can allow the probability measure to take on negative values, extending its range from $[0,1]$ to $\mathbb{R}$. 

\begin{definition}
A \textit{negative probability space} is a triple $(X,\Sigma,\mu)$, where $\Sigma$ is an algebra for the nonempty set $X$ and $\mu:\Sigma\to\mathbb{R}$ is such that
	\begin{enumerate}
    \item $\mu(X) = 1$, and 
    \item for all disjoint $A,B\in\Sigma$, $\mu(A\cup B) = \mu(A) + \mu(B)$.
    \end{enumerate}
\end{definition}

While the Kolmogorov axioms 1 and 2 carry over intact as before, negative probability spaces differ from classical probability spaces only in the range of the measure. 
This idea has a distinguished pedigree in quantum physics, going back at least to \citet{Di42} but also \citet{Fe48,Fe87}, especially in the context of the latter's path integral formulation of quantum mechanics.
In these approaches, negative probabilities are taken as convenient intermediates in calculations, hence are not clearly intended to have any interpretive significance in a hidden variable theory.
But other authors have explicitly developed negative probability hidden variable models for a variety of systems, often to escape the usual conclusions of the EPR \citeyearpar{EPR35} and \citet{Be64,Be71} results.\footnote{See, for example, \citet{Mu82,Mu86,Po88,Wo88,HoLeSe91,AgHoSc92,SuRo93,RoSu01,ScWaSc94,HaHwKo96}; and \citet{Kr05,Kr07,Kr08,Kr09}.}$^,$\footnote{
\citet{Kr05,Kr07,Kr08,Kr09} requires that the range of $\mu$ is contained in $[-1,\infty)$, calling the resulting spaces \textit{non-monotonic probability spaces}.
This restriction makes no difference for our remarks.}
(See \citet{Mu86} and \citet[\S3.5]{HoSe91} for reviews of some of these approaches.)

A variation on this idea is to use $p$-adic probability distribution for the hidden variable theory \citep{Kh95a,Kh95b,Kh09}. 
Given any rational number $x$, one can represent $x = p^a r/s$, where $p$ is prime and $r,s$ are integers not divisible by $p$.
Then the $p$-adic norm of $x$ is $|x|_p = p^{-a}$, which intuitively measures ``how divisible'' $x$ is by $p$.
The $p$-adic numbers are the Cauchy completion of the rational numbers with the $p$-adic norm instead of the usual Euclidean norm.
It turns out that negative rational numbers can respresent limiting relative frequencies on this theory, which in turn generate a $p$-adic probability measure, so that some probabilities can indeed be represented as negative in this sense \citep{Kh95a,Kh95b,Kh09}.

Finally, in work with both practical and foundational ambitions, \citet{Ha04,Ha08} and \citet{GMHa12} argue for the inclusion of negative probabilities in a formulation of the decoherent histories approach to quantum mechanics.\footnote{
In this context especially, \citet{HaYe13} suggest restricting attention to negative probability spaces, which they call \textit{quasiprobability distributions}, whose marginal distributions match those of some Kolmogorovian probability space.  
This restriction makes no difference for our remarks.}
They address the problem of interpreting negative probability values as follows: first, they point out that the probabilities dictated by a physical theory instruct (rational) agents on how to bet on outcomes of phenomena under that theory's purview.
Standard arguments from subjective Bayesian probability theory, the so-called Dutch book arguments, seem to demand that the Kolmogorovian axioms for classical probability theory must hold for the degrees of belief of any rational agent, which determine which bets the agent regards as fair.
However, they point out that these arguments only apply to bets which are \textit{settleable}, that is, bets about events, knowledge of which the agent in question will certainly know at some point. 
They then argue that only bets on sufficiently coarse-grained alternative histories will be settleable, where this coarse-graining guarantees that the alternatives' probabilities will lie in the unit interval.
However, because \citet{Ha08} and \citet{GMHa12} demand in their decoherent histories framework a single real fine-grained history over a set of preferred variables, it can be reformulated as a noncontextual hidden variable theory to which theorem \ref{thm:main} applies.  
We discuss this application at the end of this subsection.

In the second approach to extending the range of the probability measure, one goes even further, allowing it to take on complex values, as proposed by \citet{Iv78,Yo91,Yo94,Yo95,Yo96,Mi96,SrSu94}; and \cite{Sr95,Sr98,Sr06}.\footnote{\citet{Sr06} also considers \textit{quaternionic} probability spaces, which he sometimes calls \textit{extended} measure spaces.  The same results below apply to these spaces as well.}

\begin{definition}
A \textit{complex probability space} is a triple $(X,\Sigma,\mu)$, where
$\Sigma$ is an algebra for the nonempty set $X$ and $\mu:\Sigma\to\mathbb{C}$ is such that
	\begin{enumerate}
    \item $\mu(X) = 1$, and
    \item for all disjoint $A,B\in\Sigma$, $\mu(A\cup B) = \mu(A) + \mu(B)$.
    \end{enumerate}
\end{definition}

Just as with negative probability spaces, complex probability spaces differ from classical probability spaces only in the range of the measure.  
\citeauthor{Yo91} shows that complex probability spaces arise from a generalization of the \citet{Co46} axioms for probability that, among other virtues, explicitly allow one to avoid the standard conclusions of Bell's theorem against local, realistic theories \citep{Yo94,Yo95}.
\citeauthor{Mi96} also arrives at complex probability spaces through a consideration foundational issues, in particular of time-symmetric formulations of quantum mechanics involving weak measurement.
\citeauthor{Sr95}, meanwhile, is motivated by giving a measure-theoretic treatment to the complex values taken on by path integrals.

But application of theorem \ref{thm:main} shows that, despite their attractive features, the pathological features of negative and complex probability spaces will be quite a bit stronger.

\begin{prop}
Every negative and complex probability space is a weak probability space.
\end{prop}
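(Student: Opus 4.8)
The plan is to verify directly that each defining condition of a weak probability space is met by any negative probability space and by any complex probability space; since the definition of a weak probability space is strictly weaker than either, this is an unwinding of definitions rather than a substantive argument. Recall that a weak probability space $(X,\Sigma,\mu)$ demands only that $X$ be nonempty, that $\Sigma\subseteq\mathcal{P}(X)$, and that $\mu$ map $\Sigma$ into some set $Y\supseteq[0,1]$.

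First I would fix an arbitrary negative probability space $(X,\Sigma,\mu)$. By definition $X$ is nonempty and $\Sigma$ is an algebra for $X$; since every algebra for $X$ is by definition a (nonempty) subset of $\mathcal{P}(X)$, the condition $\Sigma\subseteq\mathcal{P}(X)$ holds automatically. It then remains only to check the codomain condition: here $\mu:\Sigma\to\mathbb{R}$, and since $[0,1]\subseteq\mathbb{R}$, we may take $Y=\mathbb{R}$, so that $\mu$ indeed maps into a set containing the unit interval. Hence $(X,\Sigma,\mu)$ is a weak probability space.

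The complex case is identical in structure. For an arbitrary complex probability space $(X,\Sigma,\mu)$, again $X$ is nonempty and $\Sigma$ is an algebra and hence a subset of $\mathcal{P}(X)$, while now $\mu:\Sigma\to\mathbb{C}$. Under the standard embedding $\mathbb{R}\hookrightarrow\mathbb{C}$ we have $[0,1]\subseteq\mathbb{C}$, so taking $Y=\mathbb{C}$ confirms the codomain condition, and $(X,\Sigma,\mu)$ is a weak probability space.

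I expect no genuine obstacle here: the result follows because the notion of a weak probability space discards both the algebraic requirement on $\Sigma$ (retaining only $\Sigma\subseteq\mathcal{P}(X)$) and every constraint on $\mu$ beyond its taking values in a superset of $[0,1]$. In particular, the additivity axioms and the normalization $\mu(X)=1$ built into the negative and complex definitions are never needed for this inclusion. The only point requiring even minimal care is recording that both $\mathbb{R}$ and $\mathbb{C}$ contain $[0,1]$, which is immediate. Combined with Theorem~\ref{thm:main}, this proposition yields the corresponding no-go corollary for negative and complex probability representations just as in the generalized case.
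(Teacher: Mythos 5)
Your proof is correct and matches the paper's (implicit) reasoning exactly: the proposition is a pure unwinding of definitions, since an algebra is in particular a subset of $\mathcal{P}(X)$ and both $\mathbb{R}$ and $\mathbb{C}$ (under the standard embedding) contain $[0,1]$. The paper treats this as immediate and offers no separate proof, so there is nothing further to compare.
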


\begin{cor}
There exist simple quantum mechanical experiments with no weak hidden variable representation on a negative or complex probability space satisfying both WC and $\neg$FNC.
\end{cor}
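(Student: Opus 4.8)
The plan is to establish the proposition by unwinding definitions and then obtain the corollary as an immediate specialization of Corollary \ref{cor:main}. For the proposition, I would verify the three defining clauses of a weak probability space directly against the definition of a negative (resp.\ complex) probability space. A negative or complex probability space $(X,\Sigma,\mu)$ already supplies a nonempty set $X$; moreover, since $\Sigma$ is an algebra for $X$, it is in particular a subset of $\mathcal{P}(X)$, satisfying the containment clause. The only point requiring a word is the codomain condition $\mu:\Sigma\to Y\supseteq[0,1]$: here one simply takes $Y=\mathbb{R}$ in the negative case and $Y=\mathbb{C}$ in the complex case, using that $[0,1]\subseteq\mathbb{R}\subseteq\mathbb{C}$ under the standard embedding of the reals into the complex numbers. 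Hence every negative and every complex probability space is a weak probability space.

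With the proposition in hand, the corollary follows by a containment-of-classes argument. Because every negative or complex probability space is a weak probability space, any weak hidden variable representation $(X,\Sigma,\mu,E)$ whose underlying space happens to be a negative or complex probability space is, in particular, a weak hidden variable representation in the sense already defined. Thus the class of weak hidden variable representations on negative or complex probability spaces is a subclass of the class of all weak hidden variable representations. By Corollary \ref{cor:main}, there exist simple quantum mechanical experiments---namely the KS witnesses guaranteed to exist by Theorem \ref{thm:KS}---admitting no weak hidden variable representation satisfying both WC and $\neg$FNC whatsoever; a fortiori, such an experiment admits no weak hidden variable representation on a negative or complex probability space satisfying both conditions.

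I do not expect a serious obstacle, since both results reduce to definition-chasing plus an appeal to the already-established Theorem \ref{thm:main}. The one place to exercise care is confirming that restricting the underlying space to be negative or complex introduces no hidden incompatibility with the two numerical conditions defining a weak hidden variable representation. But those conditions only constrain $\mu$ to take the real values $\langle\psi|P_i|\psi\rangle\in[0,1]$ and $0$ on the relevant sets, and these values lie in both $\mathbb{R}$ and $\mathbb{C}$, so no tension arises; the negative- or complex-valuedness of $\mu$ is simply left unused wherever the representation conditions pin down a real value. Consequently the subclass inclusion is genuine and the corollary is immediate.
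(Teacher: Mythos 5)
Your proposal is correct and follows exactly the route the paper intends: the proposition is pure definition-chasing (an algebra is a subset of $\mathcal{P}(X)$, and $[0,1]\subseteq\mathbb{R}\subseteq\mathbb{C}$ supplies the codomain condition $Y\supseteq[0,1]$), and the corollary then follows a fortiori from Corollary \ref{cor:main} applied to KS witnesses, since any representation on a negative or complex probability space is in particular a weak hidden variable representation. The paper states these results without written proof precisely because the argument is this immediate specialization, so there is nothing to add.
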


As mentioned above, this corollary implicates versions of the decoherent histories formalism that demand a unique fine-grained history \citep{Ha08,GMHa12} on a preferred set of values, like field configurations or particle positions.
Briefly, this framework postulates a set of mutually exclusive and exhaustive maximally fine-grained histories of these preferred values.
This set becomes the sample space for a negative probability distribution.\footnote{
\citet{Ha04} calls probability assignment lying outside the unit interval \textit{virtual probabilities}, and \cite{Ha08} and \citet{GMHa12} call them \textit{extended probabilities}, but the formal definition is the same.}
In particular, at each time there is a negative probability distribution for the preferred variables.
Because this distribution typically is intended to reproduce all the predictions that standard quantum mechanics would make, including the predictions for KS witnesses, the corollary applies.

Previously, \citet{BaGh99,BaGh00} had argued that the decoherent histories framework in general would run afoul of the KS theorem if it was committed to its assignments of properties in histories (or probabilities thereof) being ``objective'' elements of reality, i.e., to them being independent of the choice of set of mutually exclusive and exhaustive properties.
As \citet{Gr00a,Gr00b} pointed out, however, most versions of the decoherent (or consistent) histories approach make these value assignments only relative to such a choice of set; different value assignments using different sets cannot be combined.
This is called the \textit{single framework} (or \textit{family}) \textit{rule}.
In the version of decoherent histories proposed by \citet{Ha08} and \citet{GMHa12}, the arguments of \citet{BaGh99,BaGh00} do not strictly apply, for one set of variables determining mutually exclusive and exhaustive properties is privileged.
However, our results show that the KS theorem still has a bearing on this version.
We will discuss these implications for the decoherent histories approach further in \S\ref{sec:FNC}.

\subsection{Upper (Lower) Probability Spaces}
\label{sec:U(L)PS}

Suppes and collaborators \citep{SuZa91,dBSu00,dBSu10,HaSu10} have suggested that a slightly different type of hidden variable theory, using so-called upper probability spaces, does not run afoul of Bell-type theorems and thus may be a viable approach to the completion of quantum theory.  
This possibility seems to arise from the relaxation of one property of a classical probability measure: instead of being additive on disjoint measurable sets, an upper probability is only subadditive.
\begin{definition}
An \textit{upper (lower) probability space} is a triple $(X,\Sigma,\mu)$, where $\Sigma$ is an algebra for the nonempty set $X$ and $\mu:\Sigma\rightarrow[0,1]$ is such that
	\begin{enumerate}
	\item $\mu(X) = 1$, and
    \item for all disjoint $A,B\in\Sigma$, $\mu(A\cup B)\leq (\geq) \; \mu(A) + \mu(B)$.
	\end{enumerate}
\end{definition}

The only way that upper and lower probability spaces differ from classical probability spaces is in clause 2, requiring only subadditivity or superadditivity.  
An additional condition that upper probability measures are often demanded to satisfy is \textit{monotonicity}: for any $A,B \in \Sigma$, if $A \subseteq B$ then $\mu(A) \leq \mu(B)$.
When an upper probability measure satisfies this condition then it generates a lower probability measure in a canonical way \citep{SuZa91}.
The upper and lower probabilities assigned to events together constitute an interval-valued probability assignment.
Such spaces have been used in other contexts in psychology and behavioral science for modeling agents with imprecise beliefs \citep{Br15}.
\citet{SuZa91} have suggested that they may also be useful for modeling quantum systems, but only if monotonicity is dropped.  
In particular, Suppes and collaborators have constructed concrete upper probability hidden variable theory for both the EPR \citep{HaSu10} and GHZ \citep{dBSu00,dBSu10} experiments.  
This seems to suggest that hidden variable theories modeled on upper probability spaces avoid Bell-type no-go results.  
But, again, it is easy to see that theorem \ref{thm:main} implies that upper probability spaces cannot avoid the KS theorem.

\begin{prop}
Every upper (lower) probability space is a weak probability space.
\end{prop}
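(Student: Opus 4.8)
The plan is to verify directly that an arbitrary upper (lower) probability space satisfies the three structural requirements in the definition of a weak probability space, so the argument is a straightforward containment check rather than anything substantive. Let $(X,\Sigma,\mu)$ be an upper (lower) probability space, and I would examine each clause of the definition of a weak probability space in turn. First, $X$ is nonempty by hypothesis. Second, since $\Sigma$ is an algebra for $X$, it is by definition a subset of the power set $\mathcal{P}(X)$, which is exactly what a weak probability space demands of its event space. Third, the measure satisfies $\mu:\Sigma\to[0,1]$, so taking the target set $Y=[0,1]$ yields $\mu:\Sigma\to Y$ with $Y\supseteq[0,1]$, as required.

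The observation that makes this immediate is that the definition of a weak probability space imposes no closure conditions on $\Sigma$ beyond $\Sigma\subseteq\mathcal{P}(X)$, and no additivity or normalization conditions on $\mu$ at all. Consequently, all of the extra structure carried by an upper (lower) probability space---that $\Sigma$ is a genuine algebra, that $\mu(X)=1$, and that $\mu$ is sub- (resp.\ super-)additive on disjoint pairs---is simply surplus to the requirements and may be discarded. There is accordingly no real obstacle to overcome; the only point meriting even a moment's care is confirming the codomain condition $Y\supseteq[0,1]$, which holds trivially since $\mu$ is already valued in $[0,1]$. The result is thus a one-line consequence of the two definitions, exactly parallel to the corresponding propositions already established for generalized, negative, and complex probability spaces, and it licenses the same application of Theorem~\ref{thm:main}.
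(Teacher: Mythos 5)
Your proof is correct and matches the paper's (implicit) reasoning exactly: the paper treats this proposition as an immediate definitional check, and your verification that $X$ is nonempty, $\Sigma\subseteq\mathcal{P}(X)$, and $\mu:\Sigma\to[0,1]\supseteq[0,1]$ is precisely that check. Your observation that the algebra structure, normalization, and sub-/super-additivity are simply surplus structure is also the intended point.
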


\begin{cor}
There exist simple quantum mechanical experiments with no weak hidden variable representation on an upper (lower) probability space satisfying both WC and $\neg$FNC.
\end{cor}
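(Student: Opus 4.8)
The plan is to derive this corollary as an immediate specialization of Theorem~\ref{thm:main}, using the preceding proposition as the bridge. The three ingredients are: (i) the existence of KS witnesses, guaranteed by the Kochen--Specker theorem (Theorem~\ref{thm:KS}); (ii) the proposition that every upper (lower) probability space is a weak probability space; and (iii) Theorem~\ref{thm:main} itself, which forbids any weak hidden variable representation of a KS witness from simultaneously satisfying WC and $\neg$FNC.

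First I would fix a KS witness $(\mathcal{H},\psi,\mathcal{O}_n)$, whose existence for any $\mathcal{H}$ with $\dim(\mathcal{H})\geq 3$ is precisely the content of Theorem~\ref{thm:KS}. Next, I would argue by contradiction: suppose this witness admitted a weak hidden variable representation $(X,\Sigma,\mu,E)$ in which $(X,\Sigma,\mu)$ is an upper (lower) probability space and which satisfies both WC and $\neg$FNC. By the proposition, $(X,\Sigma,\mu)$ is then a weak probability space, so $(X,\Sigma,\mu,E)$ is a weak hidden variable representation in the sense required by Theorem~\ref{thm:main}. Since the statements of WC and $\neg$FNC refer only to the data $(X,\Sigma,\mu,E)$ together with the Hilbert space structure---and not to any additivity property of $\mu$ on the whole of $\Sigma$---this representation satisfies WC and $\neg$FNC as a weak hidden variable representation. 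This directly contradicts Theorem~\ref{thm:main}, so the KS witness is a simple quantum mechanical experiment of the required kind.

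The one point deserving care, which I would make explicit, is that the hypotheses here are not vacuously satisfiable in a trivial way: WC demands that the restriction $(X,\Sigma_Q,\mu_{|\Sigma_Q})$ be a genuine classical probability space---hence \emph{additive}---on each algebra $\Sigma_Q$ generated by a mutually orthogonal spanning family $Q$, even though the ambient measure $\mu$ is only required to be sub- (resp. super-)additive. I would note that this is consistent, since sub/superadditivity imposes no obstruction to additivity on a distinguished subalgebra; the substantive content of the corollary is therefore that \emph{even granting} full classical behavior on each such $\Sigma_Q$, the relaxation to mere sub/superadditivity elsewhere cannot rescue $\neg$FNC. There is no genuine obstacle beyond this bookkeeping: once the proposition is in hand, the corollary follows in one line from Theorem~\ref{thm:main}, exactly paralleling the derivations already given for generalized, negative, and complex probability spaces.
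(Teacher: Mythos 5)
Your proposal is correct and is exactly the argument the paper intends: the corollary follows immediately by combining the proposition (every upper/lower probability space is a weak probability space) with Theorem~\ref{thm:main} and the existence of KS witnesses from Theorem~\ref{thm:KS}, which is why the paper states it without separate proof. Your additional remark on the consistency of WC with mere sub/superadditivity is a sensible clarification but does not change the route.
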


This shows that any upper probability hidden variable theory for quantum mechanics must violate either WC or $\neg$FNC when it represents a KS witness.  
Although the examples considered by Suppes and collaborators for the EPR \citep{HaSu10} and GHZ \citep{dBSu00,dBSu10} experiments themselves do not contain enough possible measurements to be KS witnesses, the measurements they do concern are subsets of KS witnesses.  
Thus, since these examples satisfy WC, one can avoid contradiction with the KS theorem by finding a finite null cover for the space.  
To do this, in appendix \ref{sec:appendix-upper} we explicitly construct a finite null cover consistent with each of their examples.  One might accept that upper and lower probability spaces will lead to certain kinds of pathologies because, as \citet{SuZa91} already appear to accept, these models must be nonmonotonic.  But our result implies that to use upper and lower probability spaces for quantum mechanics, one must additionally accept the pathology of a finite null cover, which may be less palatable.

\subsection{Quantum Measure Theory}
\label{sec:QMT}
\citet{So94,So97} proposes an alternative probability theory for use in quantum gravity and quantum cosmology.  
He claims that quantum mechanics is a special case of an alternative probability theory in which the additivity requirement is weakened to allow for pairwise, but no higher-order, interference.  
These he calls \textit{quantum measure spaces}. 
(See \citet{Gu10a,Gu10b} for a mathematical treatment, which we follow.)

\begin{definition}
A \textit{quantum measure space}\footnote{\citet{Gu10b} also includes an extra condition called \textit{regularity}.  For our purposes it suffices to consider the simpler and more general spaces defined here, of which Gudder's are a special case.} is a triple $(X,\Sigma,\mu)$, where $\Sigma$ is an algebra for the nonempty set $X$ and $\mu: \Sigma\to[0,1]$ is such that
	\begin{enumerate}
	\item $\mu(X) = 1$, and
    \item for all pairwise disjoint $A,B,C\in\Sigma$,
    \[\mu(A\cup B\cup C) = \mu(A\cup B) + \mu(B\cup C) + \mu(A\cup C) - \mu(A) - \mu(B) - \mu(C).\]
	\end{enumerate}
\end{definition}

The only way that quantum measure spaces differ from classical probability spaces is in clause 2.  
While not requiring the usual additivity axiom, quantum measure spaces instead require a weaker condition amounting to additivity on triples of events.\footnote{
One can also generalize to even weaker additivity constraints, to which our results apply equally.}  
The failure of additivity on pairs of events allows for the possibility of pairwise interference.  
So, Sorkin claims, one can accurately represent quantum experiments like the triple slit setup, where quantum mechanics predicts pairwise interference but strict additivity on triples.

Since Sorkin's suggestion, there has been quite a bit of work on quantum measure theory, including using it as a model for a noncontextual hidden variable theory.
For example, \citet[p. 519]{CDHMRS07} explicitly suggest that
``it is possible to attribute the correlations in the EPRB setup to non-contextual hidden variables, so long as they are \textit{quantal} hidden variables,'' that is, hidden variables described using quantum measure theory rather than classical probability theory.
Moreover, previous work on the KS theorem in the context of quantum measure theory revealed specific cases where a finite null cover arose for certain quantum measures designed to recover the empirical predictions of quantum mechanics \citep{DoGT08,SuWa10}. % How does this bear on the conjecture at the bottom of page 14 of DoGT08?
But theorem \ref{thm:main} allows us to conclude something quite a bit stronger---that finite null covers are in a certain sense \textit{generic} in quantum measure theories, in that they \textit{must} appear in any quantum measure theoretic representation of a KS witness.  First, we state our result with the same logical structure as the previous propositions.

\begin{prop}
Every quantum measure space is a weak probability space.
\end{prop}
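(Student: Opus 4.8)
The plan is to verify directly that each defining ingredient of a quantum measure space already meets the (much weaker) requirements of a weak probability space, so that the claim follows by simply unpacking both definitions. Recall that a weak probability space $(X,\Sigma,\mu)$ demands only three things: that $X$ be nonempty, that $\Sigma\subseteq\mathcal{P}(X)$, and that $\mu:\Sigma\to Y$ for some $Y\supseteq[0,1]$. A quantum measure space supplies strictly more structure than this, so the verification is a matter of checking containments rather than constructing anything.

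First I would observe that the nonemptiness of $X$ is given outright in the definition of a quantum measure space. Second, since $\Sigma$ is by hypothesis an algebra for $X$, it is in particular a (nonempty) subset of the power set $\mathcal{P}(X)$, which is exactly the second requirement. Third, the quantum measure is a map $\mu:\Sigma\to[0,1]$; taking $Y=[0,1]$ we have $Y\supseteq[0,1]$ trivially, so the codomain condition holds. Having checked all three conditions, I would conclude that the triple $(X,\Sigma,\mu)$ is a weak probability space, as claimed.

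There is no genuine obstacle here, and that is precisely the point of the framework: the definition of a weak probability space was engineered to be a common weakening of all the alternative probability theories under consideration, so it imposes \emph{no} additivity constraint on $\mu$ whatsoever. Consequently the distinctive axiom of quantum measure theory---additivity on triples, condition~2 of the definition---plays no role in the proof and need not be invoked at all. The entire content of the proposition reduces to the two trivial facts that an algebra is a subset of the power set and that $[0,1]$ contains itself. (The earlier propositions of this same form, for generalized, negative, complex, and upper/lower probability spaces, admit the identical one-line argument, differing only in which codomain one records for $\mu$.) Once the proposition is in hand, the accompanying corollary for quantum measure spaces follows immediately from Theorem~\ref{thm:main} by specialization, exactly as in the preceding subsections.
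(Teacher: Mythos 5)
Your proof is correct and matches the paper's (implicit) justification exactly: the paper states this proposition without proof precisely because, as you observe, it follows by unpacking the two definitions---an algebra is a nonempty subset of $\mathcal{P}(X)$, and $\mu:\Sigma\to[0,1]$ satisfies the codomain condition with $Y=[0,1]$. Nothing further is needed.
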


\begin{cor}
There exist simple quantum mechanical experiments with no weak hidden variable representation on a quantum measure space satisfying both WC and $\neg$FNC.
\end{cor}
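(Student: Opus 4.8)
The plan is to obtain this corollary as an immediate specialization of the main no-go result, Corollary~\ref{cor:main}, by way of the preceding Proposition that every quantum measure space is a weak probability space. First I would dispatch that Proposition, which is a matter of comparing definitions rather than computing: a quantum measure space $(X,\Sigma,\mu)$ consists of a nonempty set $X$, a subset $\Sigma\subseteq\mathcal{P}(X)$ (in fact an algebra), and a map $\mu:\Sigma\to[0,1]$. Taking $Y=[0,1]$ in the definition of a weak probability space, all three defining clauses are satisfied verbatim; the triple-additivity axiom peculiar to quantum measure spaces is simply forgotten, since weak probability spaces impose no additivity requirement whatsoever.

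With the Proposition established, the remaining argument is purely logical. A \emph{weak hidden variable representation on a quantum measure space} is a quadruple $(X,\Sigma,\mu,E)$ in which $(X,\Sigma,\mu)$ is a quantum measure space and the map $E$ satisfies the two defining conditions of a weak hidden variable representation. By the Proposition, the underlying triple is a weak probability space, and the conditions on $E$ are exactly those of a weak hidden variable representation; hence every such quadruple is a fortiori a weak hidden variable representation. The Kochen--Specker theorem (Theorem~\ref{thm:KS}) guarantees the existence of KS witnesses, and Corollary~\ref{cor:main} tells us these admit no weak hidden variable representation satisfying both WC and $\neg$FNC. Were one of them to admit such a representation on a quantum measure space, that representation would---being a weak hidden variable representation---contradict Corollary~\ref{cor:main}. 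The witnesses of Corollary~\ref{cor:main} therefore also witness the present statement.

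I expect no genuine obstacle: all the mathematical force resides in Theorem~\ref{thm:main}, and this corollary is a routine instantiation, identical in form to those already proved for generalized, negative, complex, and upper (lower) probability spaces. The single point meriting a moment's care is that WC and $\neg$FNC are hypotheses imposed on the representation, not consequences of the quantum measure structure---WC in particular forces the quantum measure to collapse to an ordinary additive probability on each algebra $\Sigma_Q$ generated by an orthogonal spanning set, which is a substantive restriction but precisely the one being assumed. Since the goal is only to rule out the joint satisfaction of WC and $\neg$FNC for some experiment, I need prove nothing further about quantum measures; I merely assume the two conditions and appeal to the contradiction already secured.
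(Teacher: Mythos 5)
Your proposal is correct and is exactly the argument the paper intends: the Proposition is a direct comparison of definitions (a quantum measure space's algebra and $[0,1]$-valued measure satisfy the weak probability space clauses with $Y=[0,1]$, the triple-additivity axiom being simply discarded), and the Corollary then follows by instantiating Corollary~\ref{cor:main} on a KS witness, whose existence Theorem~\ref{thm:KS} guarantees. The paper states both results without proof precisely because they are this kind of routine specialization, and your closing observation that WC and $\neg$FNC are assumptions on the representation rather than consequences of the quantum measure structure is also the correct reading.
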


\citet{SuWa10} investigated a related question, whether an event $A \in \Sigma$ of a quantum measure space admits of a (nontrivial) \textit{quantum cover}.  A quantum cover for $A\in\Sigma$ is a collection $\{A_i\} \subseteq \Sigma$ such that $A \subseteq \bigcup_i A_i$ and for any quantum measure $\mu$, $\mu(A_i)=0$ for all $i$ implies $\mu(A)=0$.
They were able to show that every non-atomic event of the algebra $\Sigma$, in particular the whole space $X$, possesses a quantum cover if $X$ is finite in cardinality.

Our results, by contrast, can be restated as follows:
\begin{prop}
If a KS witness has a weak hidden variable representation on a quantum measure space $(X,\Sigma,\mu)$, then for each $A \in \Sigma$ not all covers of $A$ can be quantum covers.  In particular, not all covers of $X$ can be quantum covers.
\end{prop}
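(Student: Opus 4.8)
The plan is to use Theorem~\ref{thm:main} to manufacture an \emph{explicit} finite null cover of $X$ and then to observe that the given quantum measure $\mu$ itself witnesses that this cover fails the defining implication of a quantum cover. First I would locate the cover. Since every quantum measure space is a weak probability space, the quadruple $(X,\Sigma,\mu,E)$ is a weak hidden variable representation of the KS witness $(\mathcal{H},\psi,\mathcal{O}_n)$; assuming, as in the proposals at issue, that it satisfies WC, Theorem~\ref{thm:main} forces it to violate $\neg$FNC. In fact the proof of that theorem delivers the cover concretely: the sets $R_1=\bigcup C_1$ and $R_2=\bigcup C_2$ are each a finite union of $\mu$-null members of $\Sigma$, and were $R_1\cup R_2$ to omit any point of $X$, the resulting $x\in X'$ would yield a Kochen--Specker-forbidden function $f_x$ (Theorem~\ref{thm:KS}) with no further appeal to $\neg$FNC; hence $R_1\cup R_2=X$. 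Listing $C_1\cup C_2=\{B_1,\dots,B_m\}$, we thus have $\mu(B_k)=0$ for every $k$ and $\bigcup_{k=1}^m B_k=X$.

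Second I would dispatch the measure-positive events, which include the headline case $A=X$. For any $A\in\Sigma$ we have $A\subseteq X=\bigcup_k B_k$, so $\{B_1,\dots,B_m\}$ is a cover of $A$; and since $\mu$ is itself a quantum measure on $(X,\Sigma)$ with $\mu(B_k)=0$ for all $k$, whenever $\mu(A)\neq 0$ the implication ``$\mu(B_k)=0$ for all $k$ implies $\mu(A)=0$'' has a true premise and a false conclusion. Hence $\{B_1,\dots,B_m\}$ is a cover of $A$ that is not a quantum cover. Taking $A=X$, where $\mu(X)=1$, gives the ``in particular'' clause immediately.

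The main obstacle is the residual case $\mu(A)=0$ with $A\neq\emptyset$, where $\mu$ no longer witnesses anything. (The case $A=\emptyset$ is a genuine exception, since $\nu(\emptyset)=0$ for every quantum measure $\nu$ and so every cover of $\emptyset$ is vacuously a quantum cover; ``each $A$'' must therefore be read as ranging over nonempty events.) Here one must supply a \emph{different} quantum measure $\nu$ on $(X,\Sigma)$ with $\nu(A)\neq 0$ that admits a $\nu$-null cover of $A$; the non-monotonicity of quantum measures is precisely the feature that makes this possible in principle, but exhibiting such a $\nu$ in general is the delicate point. I would attempt to build it by conditioning $\mu$ on $A$, or by mixing $\mu$ with a measure detecting $A$, so as to reduce the question to a whole-space situation on $A$; verifying that the candidate remains normalized and triple-additive is where the substantive effort would lie.
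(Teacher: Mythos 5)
Your proposal follows essentially the same route as the paper's, which in fact offers no explicit proof of this proposition: it is presented as a restatement of Theorem~\ref{thm:main}, and the intended argument is exactly your first two steps. You are right that the covering property $R_1\cup R_2=X$ needs only Theorem~\ref{thm:KS} and the definition of a weak hidden variable representation (no appeal to $\neg$FNC), right that WC must be assumed to make the members of $C_2$ null (the proposition's hypothesis omits WC, but every proposal the paper discusses satisfies it), and right that $\mu$ itself then witnesses that this finite $\mu$-null cover---which covers every $A\in\Sigma$---fails to be a quantum cover of any $A$ with $\mu(A)\neq 0$, in particular of $X$.

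Your two caveats are imprecisions in the paper's statement, not defects of your argument. For $A=\emptyset$ the statement is literally false, exactly as you say: clause 2 applied to $(\emptyset,\emptyset,\emptyset)$ forces $\nu(\emptyset)=0$ for every quantum measure $\nu$, so every cover of $\emptyset$ is a quantum cover. For nonempty $A$ with $\mu(A)=0$ the paper supplies nothing either; but note that your ``delicate point'' is easier than you think, because $\nu$ need not be related to $\mu$ at all (conditioning or mixing is the wrong tool). If there exists $A_1\in\Sigma$ with $A\subsetneq A_1\subsetneq X$, pick $x\in A$, $y\in A_1\setminus A$, $z\in X\setminus A_1$, and set $\nu(S)=g(S\cap\{x,y,z\})$, where $g(\emptyset)=g(\{x,y\})=0$, $g(\{x\})=1/2$, $g(\{y\})=g(\{z\})=1/4$, and $g(\{x,z\})=g(\{y,z\})=g(\{x,y,z\})=1$. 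The only nontrivial instance of triple additivity for $\nu$ is the one coming from the partition of $\{x,y,z\}$ into singletons, where $0+1+1-\tfrac12-\tfrac14-\tfrac14=1$ checks out; so $\nu$ is a quantum measure on $(X,\Sigma)$, it vanishes on the cover $\{A_1\}$ of $A$, and $\nu(A)=1/2\neq 0$. This settles every nonempty $A$ admitting such an intermediate set; when none exists the claim can genuinely fail, which vindicates your reading that the proposition's real content is the ``in particular'' clause about $X$.
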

Thus, even if $X$ has \textit{some} quantum cover, our result shows that $X$ must also have some cover that fails to be a quantum cover.  \citet{DoGT08} and \citet{SuWa10} have advocated for an interpretation of quantal measure-zero events as being ``precluded'' from occurring and argued that the problem with classical probability models of noncontextual hidden variable theories is that they allow for events covering the sample space to be precluded in KS witnesses.
However, the above proposition shows that moving to quantum measure spaces is no help: they too must have a set of precluded events that cover the sample space.

\section{Against Finite Null Covers}
\label{sec:FNC}

We believe that the condition WC is unobjectionable, for the use of classical probability theory in experiments without noncommuting observables is an anodyne feature of scientific practice.  
To save the possibility of noncontextual hidden variable theories, one might therefore be tempted to reject $\neg$FNC. 
However, it is not obvious what a physical interpretation of a finite null cover could be.
We now show a further feature of finite null covers that make such a rejection at the very least unattractive.  
Specifically, we show that a finite null cover involves a kind of pathology or irrationality for those who use it to set their degrees of belief.

A physical theory involving probabilities usually aspires to instruct its users on how to set their expectations to outcomes of experiments within the theory's purview, if the theory is to be regarded as correct.
If such a theory dictates that the probability of a certain event is $p$, a bet that pays out $s$ (in dollars, say,) would have a fair price of $ps$: taking such a bet would on average not yield any net gain or loss if the agent's degrees of belief are a classical probability measure.
Thus, one way to describe betting irrationalities in probability theories is through the concept of a \textit{Dutch book}, a finite sequence of bets that the agent regards as fair but which is guaranteed to yield a loss for the agent.

For our purposes, a \textit{Dutch book} in a weak probability space $(X,\Sigma,\mu)$ is a collection of events $B_1,...,B_m\in\Sigma$ along with numbers $s_1,...,s_m\in\mathbb{R}$ such that for any $x\in X$,
\[\sum_{i=1}^m s_i(\chi_{B_i}(x)-\mu(B_i))<0,\]
where $\chi_{B_i}(x)$ is the characteristic function of $B_i$, i.e.,
\begin{equation*}
\chi_{B_i}(x) = \begin{cases}
1, \text{ if } x \in B_i, \\
0, \text{ if } x \notin B_i.
\end{cases}
\end{equation*}
The events $B_i$ are those on which an agent is betting, with the numbers $s_i$ representing the payout for a winning bet on $B_i$ (and $s_i \mu(B_i)$ representing the bet's fair buy-in).
If a weak probability space contains a Dutch book, then an agent who sets their beliefs according to that probability space will accept as fair a series of bets on which she is guaranteed to lose money.  
Hence, a Dutch book illustrates a kind of irrationality or pathology in the probability space in question.

\begin{prop}
If a weak probability space $(X,\Sigma,\mu)$ violates $\neg$FNC, then it contains a Dutch book.\footnote{See also \citet{Fe15} for a discussion of Dutch Books in the context of generalized probability spaces.}
\end{prop}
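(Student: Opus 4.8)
The plan is to use the finite null cover itself as the collection of events on which to bet, exploiting the fact that each such event has measure zero, so that its ``fair buy-in'' $s_i\mu(B_i)$ vanishes. Concretely, since $(X,\Sigma,\mu)$ violates $\neg$FNC, there is by definition a finite collection $\{B_1,\ldots,B_m\}\subseteq\Sigma$ with $\mu(B_i)=0$ for every $i$ and $\bigcup_{i=1}^m B_i = X$. I would take exactly these $B_i$ as the events of the Dutch book and then choose the stakes to make the defining inequality hold.

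For the stakes, I would observe that because $\mu(B_i)=0$ for each $i$, the general Dutch-book expression collapses: for any $x\in X$,
\[\sum_{i=1}^m s_i\bigl(\chi_{B_i}(x)-\mu(B_i)\bigr) = \sum_{i=1}^m s_i\,\chi_{B_i}(x).\]
The idea is then to take every stake to be negative, say $s_i=-1$ for all $i$, so that the right-hand side becomes $-\sum_{i=1}^m \chi_{B_i}(x)$. Intuitively, the agent is paying nothing to bet against each measure-zero event, yet the cover guarantees that at least one such bet is lost regardless of which hidden state obtains.

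The final step is to verify strict negativity everywhere using the covering property. Since $\bigcup_{i=1}^m B_i = X$, every $x\in X$ lies in at least one $B_i$, so $\sum_{i=1}^m \chi_{B_i}(x)\geq 1$. Hence $-\sum_{i=1}^m \chi_{B_i}(x)\leq -1<0$ for every $x\in X$, which is precisely the defining inequality of a Dutch book for the chosen $B_i$ and $s_i$.

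As for difficulty, there is essentially no obstacle here: the entire content lies in recognizing that measure-zero events carry zero buy-in, so betting against each of them at a fixed negative stake costs nothing, while the covering property forces a guaranteed loss no matter which hidden state is realized. The only point worth double-checking is that a point $x$ lying in several $B_i$ simultaneously causes no trouble, but this only makes the sum more negative, so the strict bound $<0$ is preserved in every case.
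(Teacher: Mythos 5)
Your proposal is correct and matches the paper's own proof essentially verbatim: both use the finite null cover as the betting events, exploit $\mu(B_i)=0$ to collapse the Dutch-book sum to $\sum_i s_i\chi_{B_i}(x)$, and use the covering property with negative stakes (the paper takes arbitrary $s_i<0$, you take $s_i=-1$) to guarantee the sum is strictly negative for every $x$.
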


\begin{proof}
Let $B_1,...,B_m$ be a finite null cover for $X$.  
Let $s_i < 0$ for all $i \in \{1,\ldots,m\}$.  
Then for any $x\in X$, we know that $x\in B_i$ for some $i \in \{1,\ldots,m\}$ so
\[\sum_{i=1}^m s_i(\chi_{B_i}(x)-\mu(B_i)) = \sum_{i=1}^m s_i\chi_{B_i}(x) < 0.\]
\end{proof}

This shows that whenever a weak probability space violates $\neg$FNC, anyone who sets their beliefs according to that probability space will be irrational in the sense that she will accept as fair a series of bets on which she is guaranteed to lose money.

Is there any way around this deficiency?
In the context of negative probability spaces, \citet{Ha04} and \citet{GMHa12} consider the problem of whether a Dutch book can be made against an agent who assigns negative probabilities to some histories.
They point out that certain events will not be \textit{settleable} by an agent, i.e., their outcome will not be recorded in any records in the agent's history.
Under the appropriate decoherence conditions, negative probability events will not be settleable.
Although the events in a finite null cover are not assigned negative probability, one might question whether they are all settleable; if they are not, then an agent setting her degrees of belief according to them would not be susceptible to a Dutch book.
However, each individual event does seem to be setttleable: an examination of the proof of theorem \ref{thm:main} shows that the events in question can be constructed by finitely many logical operations from the measurements determined by \textit{compatible} projection operators in a simple quantum mechanical experiment.
That is, for any individual probability zero event in a finite null cover, one could do an experiment to check whether this event occurs.

It might be protested that despite this, several of the events together are not settleable since the projection used to generate one probability zero event in the finite null cover may fail to commute with the projection used to generate another probability zero event in the finite null cover.
Thus, one might object that one cannot always jointly settle some pairs of events in a finite null cover.
However, the union of the entire collection of probability zero events in the finite null cover must be settleable, for settling this event---namely, checking whether any of the individual probability zero events occurs---is equivalent to settling the question of whether \textit{anything} happens.
A persistent objector must then demand that a highly nonclassical logic of betting: that placing a finite sequence of bets does not commit one to a bet on their disjunction.
This is impossible with a classical disjunction, which is true just in case on of its disjuncts is true.

So we have not yet found satisfactory physical interpretation of finite null covers, and we conclude that a finite null cover appears to represent a kind of irrationality in a nonclassical probability theory.  
We stress that the problem with finite null covers does not only arise in a subjective belief interpretation of probability theory---similar considerations weigh against, for example, relative frequency interpretations of finite null covers as well because one would have to make sense of a finite number of relative frequency zero events who collectively have relative frequency one.  
Thus, we conclude that a finite null cover indicates a pathology in any probability theory, regardless of the interpretation of probability we choose.
Noncontextual, non-Kolmogorovian hidden variable theories therefore do not clearly offer any advantages over contextual theories.

\section{Conclusion and Future Directions}

We have considered a variety of generalizations of probability theory that researchers have proposed in order to construct hidden variable models for quantum mechanics.  
These alternative probability theories---including generalized probability spaces, extended probability spaces, upper/lower probability spaces, and quantum measure spaces---all have the advantage of avoiding Bell's theorem.  
We showed that even though these generalizations of probability theory avoid Bell's theorem, they nevertheless fall prey to another no-go theorem due to Kochen and Specker.  
The KS theorem implies that if we apply any of these alternative probability theories in quantum mechanics, then it must violate one of two weak constraints---it must either fail to be weakly classical, or else contain a finite null cover.  
We submit that the constraint of weak classicality is an unobjectionable precondition to making sense of any probabilistic experiment in quantum physics, and weak classicality has been implicitly accepted by all the authors considered.  
On the other hand, a finite null cover is a kind of pathology in any alternative probability theory.  
So the result of this paper can be understood as a no-go theorem for the use of these alternative probability theories for quantum mechanics.

There is still further research to be done in this area.  
First, there are other alternative probability theories that have been proposed for quantum mechanics that are not yet known to fall under the purview of our results---for example, the \textit{prism models} of \citet{Fi82c}, the \textit{deterministic nonmeasurable models} of \citet{Pi83}, the \textit{generic models} of \citet{We06a,We06b}, and perhaps others.  
Further work is required to see whether the results of this paper can someone be extended to rule out even more alternative probability theories that have been proposed for quantum mechanics.  
Second, the results of this paper show a logical relation between the KS theorem and Bell's theorem: the KS theorem is, in a certain sense, stronger than Bell's theorem, for the former rules out hidden variable models that the latter does not, but not vice versa.  
However, one finds a plethora of other no-go theorems in the literature, including the recent PBR theorem \citep{PuBaRu12}.  
It would be interesting to investigate the logical relations between the KS theorem, Bell's theorem, and other existing no-go results.  
Is there one no-go theorem whose witnesses encompass all the others'?

\appendix
\section{Null Covers for Some Upper Probability Space Hidden Variable Models}
\label{sec:appendix-upper}
\subsection{EPR Setup}
In the Bell-EPR setup \citep{ClHoShHo69}, Alice and Bob have two possible ``yes/no" measurements they can make on the Bell state.  
Let $A,A'$ and $B,B'$, respectively, denote the events that these measurements return the value ``yes".  
The projection operators corresponding with each element of $\{A,A'\}$ commutes with that of each element of $\{B,B'\}$, but that of $A$ ($B$) does not commute with that of $A'$ ($B'$). 
%Thus while quantum mechanics predicts $P(A)=P(A')=P(B)=P(B')=1/2$, for certain setups it also predicts $\E(AB)=\E(A'B') = -\sqrt{3}/2$, $\E(AB') = -1/2$, and $\E(A'B) = -1$.
Despite this non-commutativity, \citet{HaSu10} propose to assign an \textit{upper probability} to the algebra of events generated from these four that also reproduces the predictions of quantum mechanics.  
That is, they consider an upper probability space $(X,\Sigma,\mu)$ and assign the following upper probabilities $\mu(C)$ to the atomic events $C$ of $X$, where for any $S \in \Sigma$, $S^c = X \backslash S$:
\begin{center}
\begin{tabular}{r|cccc}
$\cap$ 				& $A \cap A'$ 				& $A^c \cap A'$ 				& $A \cap A'^c$ 				& $A^c \cap A'^c$ \\ \hline 
$B \cap B'$ 		& $0$ 							& $\frac{1}{16}$ 					& $\frac{1}{8}$						& $\frac{1}{8} + \frac{\sqrt{3}}{8}$ \\
$B^c \cap B'$ 	& $\frac{1}{8}$					& $\frac{1}{8} - \frac{\sqrt{3}}{8}$& $0$								& $\frac{1}{16}$ \\
$B \cap B'^c$	& $\frac{1}{16}$				& $0$								& $\frac{1}{8} - \frac{\sqrt{3}}{8}$& $\frac{1}{8}$ \\
$B^c \cap B'^c$ & $\frac{1}{8} + \frac{\sqrt{3}}{8}$& $\frac{1}{8}$				& $\frac{1}{16}$					& $0$
\end{tabular}
\end{center}
They also assign the following (partial) joint probabilities:
\begin{center}
\begin{tabular}{r|cccc}
$\cap$ 	& $A$ 								& $A^c$ 							& $A'$ 								& $A'^c$ \\ \hline
$B$		& $\frac{1}{4} - \frac{\sqrt{3}}{8}$& $\frac{1}{4} + \frac{\sqrt{3}}{8}$& $0$								& $\frac{1}{2}$\\
$B^c$	& $\frac{1}{4} + \frac{\sqrt{3}}{8}$& $\frac{1}{4} - \frac{\sqrt{3}}{8}$& $\frac{1}{2}$						& $0$ \\
$B'$	& $\frac{1}{8}$						& $\frac{3}{8}$						& $\frac{1}{4} - \frac{\sqrt{3}}{8}$& $\frac{1}{4} + \frac{\sqrt{3}}{8}$\\
$B'^c$	& $\frac{3}{8}$						& $\frac{1}{8}$						& $\frac{1}{4} + \frac{\sqrt{3}}{8}$& $\frac{1}{4} - \frac{\sqrt{3}}{8}$
\end{tabular}
\end{center}
In addition, they set 
\begin{equation*}
\mu(A) = \mu(A^c) = \mu(B) = \mu(B^c) = \mu(A') = \mu(A'^c) = \mu(B') = \mu(B'^c) = 1/2.
\end{equation*}
They verify that these assignments are consistent and reproduce the predictions of quantum mechanics for a certain Bell-EPR experiment, even though they do not uniquely specify an upper probability space, which requires $2^{16} - 2 = 65,534$ assignments to make. 
Note as well that the symmetry of the EPR state requires that these assignments are invariant under the atomic complementation map.

Theorem \ref{thm:main} implies that the upper probability space hidden variable theory described above must contain a finite null cover if it is not inconsistent, since it satisfies WC.  The following additional assignments witness that possibility:\footnote{Such an additional assignment to realize a finite null cover does not appear to be unique in general, although it may be so in specific cases.}
\begin{gather*}
\mu(A^c \cap ((A'^c \cap B) \cup (A' \cap B^c \cap B') )) = \mu(A \cap ((A' \cap B^c) \cup (A'^c \cap B \cap B'^c) )) = 0, \\
\mu(A^c \cap ((A' \cap B^c) \cup (A'^c \cap B \cap B'^c))) = \mu(A \cap ((A'^c \cap B) \cup (A' \cap B^c \cap B'))) = 0.
\end{gather*}
Note that these assignments are also preserved under the atomic complementation map.  
To show this assignment is consistent with the previous assignments, we must check that $\mu$ is still subadditive on disjoint sets.  
This is automatic for the decomposition of these newly assigned null sets.  
The only superset of any of these null sets that is assigned a measure so far is the whole space.  
So, for example:
\begin{align*}
1 &= \mu(X) \\
&\leq \mu(A^c \cap ((A'^c \cap B) \cup (A' \cap B^c \cap B') )) + \mu((A^c \cap ((A'^c \cap B) \cup (A' \cap B^c \cap B') ))^c) \\
&= \mu((A^c \cap ((A'^c \cap B) \cup (A' \cap B^c \cap B') ))^c) \\
&\leq \sum_{\text{atomic } C \in X} \mu(C) \\
& \qquad - \mu(A^c \cap A'^c \cap B \cap B') - \mu(A^c \cap A'^c \cap B \cap B'^c) - \mu(A^c \cap A' \cap B^c \cap B') \\
&= 1 + \frac{3}{8} - \left(\frac{1}{8} + \frac{\sqrt{3}}{8}\right) - \frac{1}{8} - \left(\frac{1}{8} - \frac{\sqrt{3}}{8}\right) = 1,
\end{align*}
where in the last line we have used the calculation at the bottom of p. 97 of \cite{HaSu10}.  
The remaining verifications are similar.  
Since subadditivity holds for this decomposition into atoms, it holds for decomposition into larger subsets as well.

\subsection{GHZ Setup}
For the GHZ setup \citep{GrHoZe89,GrHoShZe90,Me90}, there are three ``yes/no" observables, none of which pairwise commute.
Let $A,B,C$ denote the events that these observables return the value ``yes."
%Quantum mechanics predicts that $\E(A)=\E(B)=\E(C)=1$, yet $\E(ABC)=-1$.
In order to reproduce the predictions of quantum mechanics, \citet{dBSu10} make the following assignments to the atoms of an upper probability space hidden variable model $(X,\Sigma,\mu)$ for this setup, where again $S^c = X \backslash S$ for any $S \in \Sigma$:
\begin{gather*}
\mu(A \cap B \cap C) = \mu(A^c \cap B^c \cap C^c) = 1, \\
\mu(A^c \cap B \cap C) = \mu(A \cap B^c \cap C) = \mu(A \cap B \cap C^c) = 0, \\
\mu(A \cap B^c \cap C^c) = \mu(A^c \cap B \cap C^c) = \mu(A^c \cap B^c \cap C) = 0.
\end{gather*}
They also make the following (partial) joint assignments:
\begin{gather*}
\mu(A) = \mu(B) = \mu(C) = 1, \\
\mu(A^c) = \mu(B^c) = \mu(C^c) = 0.
\end{gather*}
Since their assignments are designed to satisfy WC and reproduce the quantum mechanical expectations, on pain of contradiction they must be compatible with the existence of a finite null cover by theorem \ref{thm:main}.  
Their assignments do not completely determine the values the upper probability measure takes on all $2^8-2=254$ nontrivial measurable sets, but by fixing the following assignment the finite null cover is achieved:
\begin{equation*}
\mu((A \cap B \cap C) \cup (A^c \cap B^c \cap C) \cup (A \cap B^c \cap C^c) \cup (A^c \cap B \cap C^c) ) = 0.
\end{equation*}
To show this assignment is consistent with the previous assignments, we must check that the measure is still subadditive on disjoint sets.  
This is automatic for its decomposition into disjoint sets because it is null. 
The only superset of this null set assigned a measure so far is the whole space:
\begin{align*}
1 &= \mu(X) \\ 
&\leq \mu((A \cap B \cap C) \cup (A^c \cap B^c \cap C) \cup (A \cap B^c \cap C^c) \cup (A^c \cap B \cap C^c)) \\
& \quad + \mu(((A \cap B \cap C) \cup (A^c \cap B^c \cap C) \cup (A \cap B^c \cap C^c) \cup (A^c \cap B \cap C^c))^c) \\
&= \mu(((A \cap B \cap C) \cup (A^c \cap B^c \cap C) \cup (A \cap B^c \cap C^c) \cup (A^c \cap B \cap C^c))^c).
\end{align*}
Hence
\begin{align*}
1 &= \mu(((A \cap B \cap C) \cup (A^c \cap B^c \cap C) \cup (A \cap B^c \cap C^c) \cup (A^c \cap B \cap C^c))^c) \\
&= \mu((A^c \cap B^c \cap C^c) \cup (A \cap B \cap C^c) \cup (A^c \cap B \cap C) \cup (A \cap B^c \cap C)) \\
&\leq \mu(A^c \cap B^c \cap C^c) + \mu(A \cap B \cap C^c) + \mu(A^c \cap B \cap C) + \mu(A \cap B^c \cap C) \\
&= 1 + 0 + 0+ 0 =1.
\end{align*}

\bibliographystyle{apalike}
\setcitestyle{authoryear}
\bibliography{quantumprobability}

\end{document}